\newcommand{\bmat}{\left[ \begin{array}}
\newcommand{\emat}{\end{array} \right]}
\newtheorem{thm}{Theorem} 
\newtheorem{theorem}[thm]{Theorem}
\newtheorem{lemma}[thm]{Lemma}
\newtheorem{corollary}[thm]{Corollary}
\newcommand{\remove}[1]{}
\newcommand{\lt}{\left}
\newcommand{\rt}{\right}
\newcommand{\ignore}[1]{}
\begin{document}
\title{Minimizing Communication in Linear Algebra}

\author{
Grey Ballard \and
James Demmel \and
Olga Holtz \and
Oded Schwartz
}



\maketitle

\begin{abstract}
In 1981 Hong and Kung \cite{HongKung81} proved a lower bound on the amount
of communication (amount of data moved between a small, fast memory
and large, slow memory) needed to perform dense, $n$-by-$n$
matrix-multiplication using the conventional $O(n^3)$ algorithm,
where the input matrices were too large to fit in the small, fast memory.
In 2004 Irony, Toledo and Tiskin \cite{IronyToledoTiskin04}
gave a new proof of this
result and extended it to the parallel case (where communication
means the amount of data moved between processors).
In both cases the lower bound may be expressed as
$\Omega$(\#arithmetic operations / $\sqrt{M}$), where
$M$ is the size of the fast memory (or local memory in
the parallel case).

Here we generalize these results to a much wider variety of
algorithms, including LU factorization, Cholesky factorization,
$LDL^T$ factorization, QR factorization, algorithms for
eigenvalues and singular values, i.e., essentially all direct
methods of linear algebra.

The proof works for dense or sparse matrices,
and for sequential or parallel algorithms.
In addition to lower bounds on the amount of data moved (bandwidth)
we get lower bounds on the number of messages required to move it (latency).

We illustrate how to extend our lower bound technique
to compositions of linear algebra operations
(like computing powers of a matrix), to decide whether it is enough
to call a sequence of simpler optimal algorithms (like matrix multiplication)
to minimize communication, or if we can do better. We give examples of both.
We also show how to extend our lower bounds to certain graph theoretic problems.

We point out recently designed algorithms for dense LU, Cholesky,
QR, eigenvalue and the SVD problems that attain these lower
bounds; implementations of LU and QR show large speedups over
conventional linear algebra algorithms in standard libraries like
LAPACK and ScaLAPACK.  Many open problems remain.
\end{abstract}

\section{Introduction}

Algorithms have two kinds of costs: arithmetic and communication,
by which we mean either moving data between levels of a memory
hierarchy (in the sequential case) or over a network connecting
processors (in the parallel case). There are two costs associated with
communication: {\em bandwidth} (proportional to the total number of words of
data moved) and {\em latency} (proportional to the number of messages in
which these words are packed and sent).
For example, we may model the cost of sending $m$ words in a
single message as $\alpha + \beta m$, where $\alpha$
is the latency (measured in seconds) and $\beta$ is the reciprocal
bandwidth (measured in seconds per word). Depending on the technology,
either latency or bandwidth costs may be larger,
often dominating the cost of arithmetic.
So it is of interest to have algorithms minimizing both communication costs.

In this paper we prove a general lower bound on the amount of data
moved (i.e., bandwidth) by a general class of algorithms,
including most dense and sparse linear algebra algorithms, as well
as some graph theoretic algorithms. Our model is the result of
Hong and Kung \cite{HongKung81} which says that to multiply two
dense $n$-by-$n$ matrices on a machine with a large slow memory
(in which the matrices initially reside) and a small fast memory
of size $M$ (too small to store the matrices, but arithmetic may
only be done on data in fast memory), $\Omega(n^3/\sqrt{M})$ words
of data must be moved between fast and slow memory. This lower
bound is attained by a variety of ``blocked'' algorithms. This
lower bound may also be expressed as
$\Omega$(\#arithmetic\_operations / $\sqrt{M}$) \footnote{The
sequential communication model used here is sometimes called the
\emph{two-level I/O model} or \emph{disk access machine (DAM)}
model (see \cite{AggarwalVitter88},
\cite{BenderBrodalFagerbergJacobVicari07},
\cite{ChowdhuryRamachandran06}).  Our model follows that of
\cite{HongKung81} and \cite{IronyToledoTiskin04} in that it
assumes the block-transfer size is one word of data ($B=1$ in the
common notation).  }.

This result was proven differently by Irony, Toledo and Tiskin
\cite{IronyToledoTiskin04} and generalized  to the parallel case,
where $P$ processors multiply two $n$-by-$n$ matrices. In the
``memory-scalable'' case, where each processor stores the minimal
$O(n^2/P)$ words of data, they obtain the lower bound
$\Omega$(\#arithmetic\_operations\_per\_processor / $\sqrt{{\rm
memory\_per\_processor}})$ = $\Omega( \frac{n^3/P}{\sqrt{n^2/P}} )
= \Omega( \frac{n^2}{\sqrt{P}} )$, which is attained by Cannon's
algorithm \cite{Cannon69} \cite[Lecture 11]{CS267a}. The paper
\cite{IronyToledoTiskin04} also considers the ``3D'' case, which
does less communication by replicating the matrices and so using
$O(P^{1/3})$ times as much memory as the minimal possible.

Here we begin with the proof in
\cite{IronyToledoTiskin04}, which starts with the sum $C_{ij} =
\sum_k A_{ik}\cdot B_{kj}$, and uses a geometric argument on the
lattice of indices $(i,j,k)$ to bound the number of updates
$C_{ij} += A_{ik} \cdot B_{kj}$ that can be performed when a
subset of matrix entries are in fast memory. This proof
generalizes in a number of ways: in particular it does not
depend on the matrices being dense, or the output being
distinct from the input.
These observations let us state and prove a general
Theorem~\ref{thm:Main} in section~\ref{sec:MainResult},
that a lower bound on the number of words moved into or out of
a fast or local memory of size $M$ is
$\Omega$(\#arithmetic operations / $\sqrt{M}$ ).
This applies to both the sequential case (where $M$
is a fast memory) and the parallel case; in the parallel
case further assumptions about whether the algorithm
is memory balanced (to estimate the effective $M$)
are needed to get a lower bound on the overall algorithm.

Corollary~\ref{cor:Latency} of Theorem~\ref{thm:Main} provides a
simple lower bound on latency (just the lower bound on bandwidth
divided by the largest possible message size, namely the memory
size $M$). Both bandwidth and latency lower bounds apply
straightforwardly to a nested memory hierarchy with more than two
layers, bounding from below the communication between any adjacent
layers in the hierarchy
\cite{Savage95,BallardDemmelHoltzSchwartz09}.

In Section~\ref{sec:Consequences}, we present simple corollaries
applying Theorem~\ref{thm:Main} to conventional
(non-Strassen-like) implementations of matrix multiplication and
other BLAS operations \cite{BLAS,newBLAS,newBLASfull} (dense or
sparse), LU factorization, Cholesky factorization and ``$LDL^T$''
factorization. These factorizations may also be dense or sparse,
with any kind of pivoting, and be exact or ``incomplete'', e.g.,
ILU \cite{Saad96} (some of these results can be also obtained,
just for dense matrices, by suitable reductions from
\cite{HongKung81} or \cite{IronyToledoTiskin04}, and we point
these out).

Section~\ref{sec:QR} considers lower bounds for algorithms that
apply orthogonal transformations to the left and/or right of
matrices. This class includes the QR factorization, the
standard algorithms for eigenvalues and eigenvectors, and the
singular value decomposition (SVD). For reasons explained there,
the counting techniques of \cite{HongKung81} and
\cite{IronyToledoTiskin04} do not apply, so we need a different
but related lower bound argument.

Section~\ref{sec:Compositions} shows how to extend our lower
bounds to more general computations where we compose a sequence of
simpler linear algebra operations (like matrix multiplication, LU
decomposition, etc.), so the outputs of one operation may be
inputs to later ones. If these intermediate results do not need to
be saved in slow memory, or if some inputs are given by formulas
(like $A(i,j) = 1/(i+j)$) and so do not need to be fetched from
memory, or if the final output is just a scalar (the norm or
determinant of a matrix), then it is natural to ask whether there
is a better algorithm than just using optimized versions of each
operation in the sequence. We give examples where this simple
approach is optimal, and when it is not. We also exploit the
natural correspondence between matrices and graphs to derive
communication lower bounds for certain graph algorithms, like
All-Pairs-Shortest-Path.

Finally, Section~\ref{sec:open} discusses attainability of these
lower bounds, and open problems. Briefly, in the dense case all
the lower bounds are attainable (in the parallel case, this is
modulo ${\rm polylog} P$ factors, and assuming the minimal
$O(n^2/P)$ storage per processor); see Tables~\ref{tbl:seq} and
\ref{tbl:par} (some of these algorithms are also pointed out in
sections~\ref{sec:Consequences} and \ref{sec:QR}). The optimal
algorithms for square matrix multiplication are well known, as
mentioned above. Optimal algorithms for dense LU, Cholesky, QR,
eigenvalue problems and the SVD are more recent, and not part of
standard libraries like LAPACK \cite{LAPACK} and ScaLAPACK
\cite{SCALAPACK}. Only in the case of Cholesky do we know of a
sequential algorithm that both minimizes bandwidth and latency
across arbitrary levels of memory hierarchy. No optimal algorithm
is known for architecture mixing parallelism and multiple memory
hierarchies, i.e., most real architectures. Optimal ``3D''
algorithms for anything other than matrix-multiplication, and
optimal sparse algorithms for anything are unknown. For highly
rectangular dense matrices (e.g., matrix-vector multiplication),
or for sufficiently sparse matrices, our new lower bound is
sometimes lower than the trivial lower bound (\#inputs +
\#outputs), and so it is not always attainable.





\section{First Lower Bound}
\label{sec:MainResult}

Let $c(i,j)$ be the memory address of a destination to put a
computed result, where $i$ and $j$ are integer indices (thus
$Mem(c(i,j))$ will contain the actual value). All we assume is
that $c(i,j)$ is a one-to-one function on a set $S_C$ of pairs
$(i,j)$ for destinations we want to compute; in other words all
$c(i,j)$ for $(i,j) \in S_C$ are distinct.
On a parallel machine $c(i,j)$ refers to
a location on some processor; the processor number
is implicitly part of $c(i,j)$.

Similarly, let $a(i,k)$ and $b(k,j)$ be memory
addresses of operands, also one-to-one on sets
made explicit below. We make no assumptions about
whether or not some $c(i,j)$ can ever equal some $a(i',j')$;
they may or may not. Similarly, the addresses
given by $a$ and $b$, or by $b$ and $c$, may
overlap arbitrarily.
We assume that the result to be stored at
each $c(i,j)$ is computed only once.

Now let $f_{ij}$ and $g_{ijk}$ be ``nontrivial'' functions
in a sense we make clear below.
The computation we want to perform is for all $(i,j) \in S_C$:
\begin{equation}
\label{eqn:Model}
Mem(c(i,j)) = f_{ij} ( g_{ijk} (Mem(a(i,k)),Mem(b(k,j)))
\; {\rm for} \; k \in S_{ij} , {\rm any\ other\ arguments} )
\end{equation}
Here $f_{ij}$ depends nontrivially on its arguments $g_{ijk}(
\cdot , \cdot )$ which in turn depend nontrivially on their
arguments $Mem(a(i,k))$ and $Mem(b(k,j))$, in the following sense:
we need at least one word of space to compute $f_{ij}$ (which may
or may not be $Mem(c(i,j))$) to act as ``accumulator'' of the
value of $f_{ij}$, and we need the values $Mem(a(i,k))$ and
$Mem(b(k,j))$ in fast memory before evaluating $g_{ijk}$. Note
also that we may not know until after the computation what $S_C$,
$f_{ij}$, $S_{ij}$, $g_{ijk}$ or ``any other arguments'' were,
since they may be determined on the fly (e.g., pivot order).

The question is how many slow memory references are required to
perform this computation, when all we are allowed to do is compute
the $g_{ijk}$ in a different order, and compute and store the
$f_{ij}$ is a different order. This appears to restrict possible
reorderings to those where $f_{ij}$ is computed correctly, since
we are not assuming it is an associative or commutative function,
or those reorderings that avoid races because some $c(i,j)$ may be
used later as inputs. But there is no need for such restrictions:
the lower bound applies to all reorderings, correct or incorrect.
Using only structural information, e.g., about the sparsity
patterns of the matrices, we can sometimes deduce that the
computed result $f_{ij}( \cdot )$ is exactly zero, to possibly
avoid a memory reference to store the result at $c(i,j)$. Section
\ref{sec:careful} discusses this possibility more carefully, and
shows how to carefully count operations to preserve the validity
of our lower bounds.

The argument, following \cite{IronyToledoTiskin04}, is: (1) Break
the stream of instructions executed into segments, where each
segment contains exactly $M$ load and store instructions (i.e.,
that cause communication), where $M$ is the fast (or local) memory
size. (2) Bound from above the number of evaluations of functions
$g_{ijk}$ that can be performed during any segment, calling this
upper bound $F$. (3) Bound from below the number of (complete)
segments by the total number of evaluations of $g_{ijk}$ (call it
$G$) divided by $F$, i.e., $\lfloor G/F \rfloor$. (4) Bound from
below the number of loads and stores by $M$ times the minimum
number of complete segments, $M \cdot \lfloor G/F \rfloor$.

Now we compute the upper bound $F$ using a geometric theorem
of Loomis and Whitney \cite{LoomisWhitney49,BuragoZalgaller88}.
We need only the simplest version of their result here:

\begin{lemma}
\label{lemma:LW} \cite{LoomisWhitney49,BuragoZalgaller88}. Let $V$
be a finite set of lattice points in ${\bf R}^3$, i.e., points
$(x,y,z)$ with integer coordinates. Let $A_x$ be the projection of
$V$ in the $x$-direction, i.e., all points $(y,z)$ such that there
exists an $x$ so that $(x,y,z) \in V$. Define $A_y$ and $A_z$
similarly. Let $| \cdot |$ denote the cardinality of a set. Then
$|V| \leq \sqrt{ |A_x| \cdot |A_y| \cdot |A_z| }$.
\end{lemma}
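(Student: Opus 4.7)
The plan is to reduce the three-dimensional statement to two dimensions by slicing $V$ along one axis and then recombining the slices via Cauchy--Schwarz. First I would fix the $z$-coordinate and consider the slice $V_z = \{(x,y) : (x,y,z) \in V\}$, so that $|V| = \sum_z |V_z|$. Each slice lives in $\mathbb{Z}^2$, and a planar set is always contained in the rectangle spanned by its two axis projections, giving the trivial bound $|V_z| \leq a(z) \cdot b(z)$, where $a(z) = |\pi_x V_z|$ and $b(z) = |\pi_y V_z|$. At the same time $V_z \subseteq A_z$, so $|V_z| \leq |A_z|$.

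The key move is to pair these two inequalities multiplicatively: $|V_z|^2 \leq |A_z| \cdot a(z) \cdot b(z)$, and therefore $|V_z| \leq \sqrt{|A_z|} \cdot \sqrt{a(z)} \cdot \sqrt{b(z)}$. Summing over $z$ and applying Cauchy--Schwarz in the form $\sum_z \sqrt{a(z)}\sqrt{b(z)} \leq \sqrt{\sum_z a(z)}\sqrt{\sum_z b(z)}$ gives
\begin{equation*}
|V| \;\leq\; \sqrt{|A_z|}\sum_z \sqrt{a(z)}\cdot\sqrt{b(z)} \;\leq\; \sqrt{|A_z|}\cdot\sqrt{\textstyle\sum_z a(z)}\cdot\sqrt{\textstyle\sum_z b(z)}.
\end{equation*}
To finish, I would observe that for each fixed $z$, $\pi_x V_z = \{x : (x,z) \in A_y\}$ is exactly the $z$-slice of $A_y$ (the projection of $V$ to the $xz$-plane); these slices partition $A_y$ as $z$ varies, so $\sum_z a(z) = |A_y|$, and symmetrically $\sum_z b(z) = |A_x|$. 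Substituting yields $|V| \leq \sqrt{|A_x|\cdot|A_y|\cdot|A_z|}$.

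There is no serious obstacle here; the only nonobvious ingredient is recognizing that pairing the slice bound $|V_z| \leq |A_z|$ with the rectangle bound $|V_z| \leq a(z)b(z)$ produces exactly the exponents Cauchy--Schwarz needs to collapse into a symmetric product of the three planar projections. An alternative would be to induct on dimension starting from the trivial one-dimensional identity, reapplying Cauchy--Schwarz at each step to obtain the general $n$-dimensional Loomis--Whitney inequality, but the direct slicing argument above suffices for the three-dimensional case invoked in the paper.
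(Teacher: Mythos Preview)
Your argument is correct and is essentially the standard slicing-plus-Cauchy--Schwarz proof of the three-dimensional Loomis--Whitney inequality. The paper itself does not prove the lemma; it is quoted from \cite{LoomisWhitney49,BuragoZalgaller88} and used as a black box, so there is no ``paper's proof'' to compare against. Your write-up would serve perfectly well as a self-contained justification, and the bookkeeping identifying $\sum_z a(z)$ with $|A_y|$ and $\sum_z b(z)$ with $|A_x|$ is handled correctly.
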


Now we must bound the maximum number of possibly different
$Mem(c(i,j))$ (or corresponding ``accumulators''), $Mem(a(i,k))$,
and $Mem(b(k,j))$ that can reside in fast memory during a segment.
Since we want to accommodate the most general case where input and
output arguments can overlap, we need to use a more complicated
model than in \cite{IronyToledoTiskin04}, where no such overlap
was possible. To this end, we consider each input or output
operand of (\ref{eqn:Model}) that appears in fast memory during a
segment of $M$ slow memory operations. It may be that an operand
appears in fast memory for a while, disappears, and reappears,
possibly several times (we assume there is at most one copy at a
time in the sequential model, and at most one for each processor
in the parallel model; this obviously is consistent with obtaining
a lower bound). For each period of continuous existence of an
operand in fast memory, we label its Source (how it came to be in
fast memory) and its Destination (what happens when it
disappears):
\begin{itemize}
\item
{\bf Source S1:} The operand was already in fast memory at the beginning of the
segment, and/or read from slow memory.
There are at most $2M$ such operands altogether,
because the fast memory has size $M$, and because a segment
contains at most $M$ reads from slow memory.
\item
{\bf Source S2:} The operand is computed (created) during the segment.
Without more information,
there is no bound on the number of such operands.
\item
{\bf Destination D1:} An operand is left in fast memory at the end of the
segment (so that it is available at the beginning of the next one),
and/or written to slow memory.
There are at most $2M$ such operands altogether,
again because the fast memory has size $M$, and because a segment
contains at most $M$ writes to slow memory.
\item
{\bf Destination D2:} An operand is {\em neither} left in fast memory nor written to
slow memory, but simply discarded.
Again, without more information,
there is no bound on the number of such operands.
\end{itemize}

We may correspondingly label each period of continuous existence
of any operand in fast memory during one segment by one of four
possible labels Si/Dj, indicating the Source and Destination of
the operand at the beginning and end of the period. Based on the
above description, the total number of operands of all types
except S2/D2 is bounded by $4M$ (the maximum number of S1 operands
plus the number of D1 operands, an upper bound) \footnote{More
careful but complicated accounting can reduce this upper bound to
$3M$.}. The S2/D2 operands, those created during the segment and
then discarded without causing any slow memory traffic, cannot be
bounded without further information. For our simplest model,
adequate for matrix multiplication, LU decomposition, etc., we
have no S2/D2 arguments; they reappear when we analyze the QR
decomposition in Section~\ref{sec:QR}.

Using the set of lattice points $(i,j,k)$
to represent each function evaluation
\linebreak
$g_{ijk} (Mem(a(i,k)),Mem(b(k,j)))$, and assuming there are no S2/D2
arguments, then by Lemma~\ref{lemma:LW}
their number is then bounded by $F = \sqrt{(4M)^3}$,
so the total number of loads and stores
is bounded by
$M \lfloor \frac{G}{F} \rfloor = M \lfloor \frac{G}{\sqrt{(4M)^3}} \rfloor \geq
\frac{G}{8\sqrt{M}} - M$.
This proves the first lower bound:

\ignore{
--------------------------------------------------------
Now we must bound the maximum number of possibly different
$Mem(c(i,j))$ (or corresponding ``accumulators'') that can reside
in fast memory during a segment.
There are three ways such locations can be in fast memory:
they can have been there at the beginning of the segment
(at most $M$),
they can be read in from memory
(at most $M$),
or they can be computed for the first time during the segment.
In the latter case, at most $2M$ can be created, because
the first $M$ will have to be written to slow memory to
make room for the second $M$.
Considering an arbitrary sequence of $M$ memory operations,
at most $2M$ different locations $c(i,j)$ can possibly have
been accessed: $M$ (updated) values that are written to
memory, and $M$ values that are left in memory for the next segment
to continue accessing.
The same is true for $Mem(a(i,k))$ and $Mem(b(k,j))$,
whether these locations coincide or not.

Using the set of lattice points $(i,j,k)$
to represent each function evaluation
\linebreak
$g_{ijk} (Mem(a(i,k)),Mem(b(k,j)))$, then by Lemma~\ref{lemma:LW}
their number is then bounded by $F = \sqrt{(2M)^3}$,
so the total number of loads and stores
is bounded by
$M \lfloor G/F \rfloor = M \lfloor G/\sqrt{8M^3} \rfloor \geq
\frac{G}{\sqrt{8M}} - M$.
This proves the first lower bound:
} 

\begin{theorem}
\label{thm:Main} In the notation defined above, and in particular assuming
there are no S2/D2 arguments (created and discarded without causing memory traffic)
the number of loads and stores needed to evaluate (\ref{eqn:Model}) is at least
$\frac{G}{8\sqrt{M}} - M$.
\end{theorem}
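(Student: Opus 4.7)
The plan is to adopt the segment-counting framework sketched just before the theorem, following \cite{IronyToledoTiskin04}. I partition the executed instruction trace into consecutive segments, each containing exactly $M$ slow-memory loads and stores (the final segment may be short). If $N$ denotes the total number of loads and stores, then $N \geq M \cdot \lfloor G/F \rfloor$ whenever $F$ is an upper bound on the number of evaluations of functions $g_{ijk}$ during any single segment. All the real work is then in establishing a sharp enough $F$.

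To bound $F$, fix a segment and let $V \subseteq \mathbf{Z}^3$ be the set of index triples $(i,j,k)$ for which $g_{ijk}$ is evaluated in this segment. By the assumption stated in (\ref{eqn:Model}), each such evaluation requires that the two input locations $a(i,k)$, $b(k,j)$ and an accumulator for $f_{ij}$ reside simultaneously in fast memory at the moment of evaluation. With the assignment $x=i$, $y=j$, $z=k$, the projection $A_x$ is contained in the set of distinct memory locations that ever serve as the $b$-operand of some $g_{ijk}$ during this segment; symmetrically for $A_y$ (the $a$-operands) and $A_z$ (the accumulators for $c$). Applying Lemma~\ref{lemma:LW} will then give $|V| \leq \sqrt{|A_x|\,|A_y|\,|A_z|}$, so it suffices to bound each projection by $4M$.

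The $4M$ bound on each projection is where the $S1/S2/D1/D2$ classification enters. For each operand and each of its maximal periods of continuous residence in fast memory during the segment, I assign a Source label ($S1$ if present at segment start or loaded from slow memory; $S2$ if created inside the segment) and a Destination label ($D1$ if present at segment end or stored to slow memory; $D2$ if discarded). Under the hypothesis of the theorem, no period is of type $S2/D2$, so every such period has either an $S1$ source or a $D1$ destination. Since the number of $S1$ occurrences is at most $2M$ (at most $M$ operands sitting in fast memory at the segment start, plus at most $M$ loads) and likewise the number of $D1$ occurrences is at most $2M$, there are at most $4M$ periods of residence in total, and hence at most $4M$ distinct operand locations playing any one of the three roles. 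Consequently $F = \sqrt{(4M)^3} = 8M^{3/2}$ is admissible, and
\[
N \;\geq\; M\left\lfloor \frac{G}{8M^{3/2}}\right\rfloor \;\geq\; \frac{G}{8\sqrt{M}} - M,
\]
which is the stated bound.

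I expect the main obstacle to be the careful bookkeeping behind the $4M$ claim, rather than the Loomis--Whitney step, which is black-box. One must count \emph{distinct memory locations} ever acting in a given role during a segment, not instantaneous states of fast memory, and handle operands that are evicted and later reloaded within the same segment without double counting; the $S1/D1$ accounting above is what achieves this. The restriction excluding $S2/D2$ is exactly the structural hypothesis that keeps this count finite and proportional to $M$; relaxing it (as QR will force us to do in Section~\ref{sec:QR}, where orthogonal updates produce many short-lived intermediates) will break the present argument and require a genuinely different lower-bound technique.
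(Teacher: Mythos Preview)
Your proposal is correct and follows essentially the same argument as the paper: segment the instruction stream into blocks of $M$ slow-memory operations, use the $S1/S2/D1/D2$ classification to bound the number of distinct operands (and hence each Loomis--Whitney projection) by $4M$, and conclude $F=\sqrt{(4M)^3}$. The only cosmetic imprecision is that each projection $A_x,A_y,A_z$ is literally a set of index pairs rather than of memory locations, but since $a,b,c$ are one-to-one this identification preserves cardinalities, so the bound is unaffected.
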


We may also write this as $\Omega$(\#arithmetic\_operations /
$\sqrt{M}$) understanding that we only count arithmetic operations
required to evaluate the $g_{ijk}$ for $(i,j) \in S_C$ and $k \in
S_{ij}$. We note that a more careful, problem-dependent analysis
that depends on how much the three arguments can overlap, may
sometimes increase the lower bound by a factor of as much as 8,
but for simplicity we omit this.

This lower bound is not always attainable, even for dense matrix
multiplication: If the matrices are so small that they all fit in
fast memory simultaneously, so $3n^2 \leq M$, then the number of
loads and stores may be just $3n^2$, which can be much larger than
$n^3/\sqrt{M}$. So a more refined lower bound is
$\max(G/(8\sqrt{M}) - M,$\#inputs + \#outputs).
We generally omit this detail
from statements of later corollaries.

Theorem~\ref{thm:Main} is a lower bound on bandwidth, the total number of words
communicated. But it immediately provides a lower bound on latency as
well, the minimum number of messages that need to be sent, where each
message may contain many words.

\begin{corollary}
\label{cor:Latency} In the notation defined above, the number of
messages needed to evaluate (\ref{eqn:Model}) is at least
$G/(8 M^{3/2}) - 1$ = \#evaluations\_of\_$g_{ijk} /
(8M^{3/2}) - 1$.
\end{corollary}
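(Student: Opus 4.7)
The plan is to obtain the latency bound as an immediate consequence of Theorem~\ref{thm:Main}, using the simple observation that each individual message can transport at most $M$ words. Let $W$ denote the total number of words moved between fast and slow memory (or between a processor and the rest of the network, in the parallel case), and let $L$ denote the number of messages used to move them. I will argue that $W \le L \cdot M$, so that $L \ge W/M$, and then plug in the bound on $W$ from Theorem~\ref{thm:Main} to conclude.

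The justification for $W \le L \cdot M$ is that the fast memory has capacity $M$, so a single message delivered into fast memory cannot deposit more than $M$ words without overflowing (and, symmetrically, a single outgoing message cannot source more than $M$ words from fast memory in one shot). Thus every message contains at most $M$ of the words counted by Theorem~\ref{thm:Main}, and the total word count $W$ is at most $LM$. This step is the only content beyond Theorem~\ref{thm:Main}, and it is essentially a matter of unwinding the definition of ``message'' in the model.

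Combining with the bandwidth bound $W \ge \dfrac{G}{8\sqrt{M}} - M$ from Theorem~\ref{thm:Main}, we get
\[
L \;\ge\; \frac{W}{M} \;\ge\; \frac{1}{M}\!\left(\frac{G}{8\sqrt{M}} - M\right) \;=\; \frac{G}{8M^{3/2}} - 1,
\]
which is exactly the stated inequality, since $G$ is by construction the number of evaluations of $g_{ijk}$. There is no real obstacle here; the only point requiring a moment of care is to make sure that the ``at most $M$ words per message'' bound really is valid in the model being used (it is, because both in the sequential two-level model and in the distributed-memory parallel model a message must fit in the recipient's local memory of size $M$). Hence Corollary~\ref{cor:Latency} drops out as a one-line consequence of Theorem~\ref{thm:Main} together with this capacity observation.
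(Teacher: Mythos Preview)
Your proof is correct and follows essentially the same approach as the paper: the paper's argument is simply that the largest possible message size is the fast (or local) memory size $M$, so one divides the bandwidth lower bound from Theorem~\ref{thm:Main} by $M$. Your write-up just makes explicit the inequality $W \le L\cdot M$ and the substitution, which is exactly what the paper's one-line proof intends.
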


The proof is simply that the largest possible message size
is the fast (or local) memory size $M$, so we divide
the lower bound from Theorem~1 by $M$.

On a parallel computer it is possible for a processor to pack $M$
words into a single message to be sent to a different processor.
But on a sequential computer the words to be sent in a single
message must generally be located in contiguous memory
locations, which depends on the data structures used. This
assumption is appropriate to capture the behavior of real
hardware, e.g., cache lines, memory prefetching, disk accesses,
etc. This means that to attain the latency lower bound on a
sequential computer, rather different matrix data structures may
be required than row-major or column-major
\cite{BallardDemmelHoltzSchwartz09,FrigoLeisersonProkopRamachandran99,ElmrothGustavsonJonssonKagstrom04,AndersenGustavsonWasniewski01,AhmedPingali00}.

Finally, we note that real computers typically don't have just one
level of memory hierarchy, but many,
each with its own underlying bandwidth and latency costs.
So it is of interest to minimize {\em
all} communication, between every pair of adjacent levels of the
memory hierarchy. As has been noted before
\cite{Savage95,BallardDemmelHoltzSchwartz09}, when the memory
hierarchy levels are nested (the L2 cache stores a subset of L3
cache, etc.) we can apply lower bounds like ours at every level in
the hierarchy.

\section{Consequences for BLAS, $LU$, Cholesky, and $LDL^T$}
\label{sec:Consequences}

We now show how Theorem~\ref{thm:Main} applies to a variety of
conventional algorithms from numerical linear algebra,
by which we mean algorithms that would cost $O(n^3)$ arithmetic
operations when applied to dense $n$-by-$n$ matrices, as opposed to
Strassen-like algorithms.

It is natural to ask whether algorithms exist that attain these lower
bounds. We point out cases where we know such algorithms exist, which
are therefore optimal in the sense of minimizing communication.
In the case of dense matrices, many optimal algorithms are known,
though not yet in all cases.
In the case of sparse matrices, little seems to be known.

\subsection{Matrix Multiplication and the BLAS}

We begin with matrix multiplication, on which our model in
Equation~(\ref{eqn:Model}) is based:

\begin{corollary}
\label{cor:SeqMM}
$G/(8\sqrt{M}) - M$ is the bandwidth lower bound for
multiplying explicitly stored matrices $C = A \cdot B$ on a sequential machine,
where $G$ is the number of multiplications performed in evaluating
all the $C_{ij} = \sum_k A_{ik} \cdot B_{kj}$, and $M$ is the fast memory size.
In the special case of multiplying a dense $n$-by-$r$ matrix times a dense
$r$-by-$m$ matrix, this lower bound is
$n \cdot r \cdot m/\sqrt{8M} -M$.
\end{corollary}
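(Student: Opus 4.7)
The plan is to read off the corollary directly from Theorem~\ref{thm:Main} by identifying matrix multiplication with the abstract template in Equation~(\ref{eqn:Model}). First I would instantiate the notation: let $c(i,j)$ be the storage address of $C_{ij}$ for $(i,j) \in S_C$, let $a(i,k)$ and $b(k,j)$ be the addresses of $A_{ik}$ and $B_{kj}$, let $g_{ijk}(x,y) = x \cdot y$, and let $f_{ij}$ be the summation that accumulates the products over $k \in S_{ij}$. The one-to-one and ``nontrivial'' requirements are immediate: each product $A_{ik} B_{kj}$ genuinely depends on both inputs, and each partial sum needs an accumulator word in fast memory. So $G$ in Theorem~\ref{thm:Main} equals the total number of scalar multiplications performed by the algorithm.

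Next I would check that the S2/D2 case is excluded, which is the only hypothesis of Theorem~\ref{thm:Main} that is not automatic. Since $A$, $B$, and $C$ are explicitly stored matrices, every $A_{ik}$ and $B_{kj}$ that enters fast memory was read from slow memory (Source S1) or was already resident at the start of the segment (also S1); it is never ``created'' by the computation. Similarly each accumulator for $C_{ij}$ is either eventually written back to slow memory or left in fast memory at the end of the segment (Destination D1). Hence there are no operands of type S2/D2, and Theorem~\ref{thm:Main} applies verbatim to give the bandwidth lower bound $G/(8\sqrt{M}) - M$.

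For the dense $n \times r$ times $r \times m$ case I would simply count $G = n \cdot r \cdot m$. To obtain the slightly sharper constant $1/\sqrt{8M}$ stated in the corollary (rather than $1/(8\sqrt{M})$), I would revisit the segment counting argument used in the proof of Theorem~\ref{thm:Main}: because the three address maps $a$, $b$, $c$ have disjoint ranges in the explicitly stored (and, in particular, dense) case, the operands of each of the three types $A$, $B$, $C$ present in fast memory during a segment number at most $2M$ individually (rather than sharing a common $4M$ pool). Applying Loomis--Whitney (Lemma~\ref{lemma:LW}) to the lattice of $(i,j,k)$ triples then yields $F \le \sqrt{(2M)^3} = \sqrt{8M^3}$ function evaluations per segment, and the segment/bandwidth accounting gives $M \lfloor G/F \rfloor \ge G/\sqrt{8M} - M = nrm/\sqrt{8M} - M$.

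The main obstacle, to the extent that there is one, is confirming that the non-overlap of $A$, $B$, $C$ in the explicitly stored case really does allow the $4M$ bound in the generic proof to be replaced by $2M$ per operand class; this is a bookkeeping verification inside the Source/Destination framework that preceded Theorem~\ref{thm:Main}, not a new argument. Everything else is direct application of the theorem to the pattern $C_{ij} = \sum_k A_{ik} B_{kj}$.
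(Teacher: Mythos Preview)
Your proposal is correct and matches the paper's treatment: the paper gives no separate proof of Corollary~\ref{cor:SeqMM}, treating it as an immediate instantiation of Theorem~\ref{thm:Main} with $g_{ijk}(x,y)=xy$, $f_{ij}=\sum_k$, and the observation that explicitly stored inputs and outputs admit no S2/D2 operands---exactly what you wrote.

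On the sharpened constant $nrm/\sqrt{8M}$ in the dense case: the paper does \emph{not} actually derive this. In the paragraph following the corollary it remarks that the bound of \cite{IronyToledoTiskin04} is ``$\sqrt{8}$ times larger than ours,'' which means the paper regards its own bound as $nrm/(8\sqrt{M})$, and the $\sqrt{8M}$ in the corollary statement is most likely a typo. Your argument for recovering the sharper constant---that when $C$ is stored disjointly from $A$ and $B$, the $A$- and $B$-operands are purely S1 (at most $2M$ each) and the $C$-operands are purely D1 (at most $2M$), so Loomis--Whitney gives $F\le\sqrt{(2M)^3}$---is valid and is precisely the ``more careful, problem-dependent analysis'' the paper alludes to after Theorem~\ref{thm:Main}. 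Just note that the relevant hypothesis is not merely that $a,b,c$ have disjoint ranges, but specifically that no $C$-entry is ever reused as an $A$- or $B$-entry (so no input operand can be S2); disjointness of $A$ from $B$ is irrelevant.
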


This nearly reproduces a result in \cite{IronyToledoTiskin04} for
the case of two {\em distinct, dense} matrices, whereas we need no
such assumptions; their bound is $\sqrt{8}$  times larger than
ours, but as stated before our bound could be improved by
specializing it to this case. We note that this result could have
been stated for {\em sparse} $A$ and $B$ in \cite{HongKung81}:
Combine their Theorem~6.1 (their $\Omega (|V|)$ is the number of
multiplications) with their Lemma~6.1 (whose proof does not
require $A$ and $B$ to be dense).

As noted in the previous section, an independent lower bound on
the bandwidth is simply the total number of inputs that need to be
read plus the number of outputs that need to be written. But
counting the number of inputs is not as simple as counting the
number of nonzero entries of $A$ and $B$: if $A$ and $B$ are
sparse, and column $i$ of $A$ is filled with zeros only, then row
$i$ of $B$ need not be loaded at all, since $C$ does not depend on
it. An algorithm that nevertheless loads row $i$ of $B$ will still
satisfy the lower bound. And an algorithm that loads and
multiplies by explicitly stored zero entries of $A$ or $B$ will
also satisfy the lower bound; this is an optimization sometimes
used in practice \cite{VuducDemmelYelick05}.

When $A$ and $B$ are dense and distinct, there are well-known
algorithms mentioned in the Introduction that (nearly) attain the
combined lower bound
\[
\Omega ( \max( n \cdot r \cdot m/ \sqrt{M}, \#inputs + \#outputs ) )
= \Omega ( \max( n \cdot r \cdot m/ \sqrt{M}, n \cdot r + r \cdot m + n \cdot m  ) ) \; \; ,
\]
see \cite{IronyToledoTiskin04} for a more
complete discussion. Attaining the corresponding latency lower
bound of Corollary~\ref{cor:Latency} requires a different data
structure than the usual row-major or column-major orders, so that
words to be sent in a single message are contiguous in memory, and
is variously referred to as {\em recursive block storage} or
storage using {\em space filling curves}, see
\cite{FrigoLeisersonProkopRamachandran99,ElmrothGustavsonJonssonKagstrom04,BallardDemmelHoltzSchwartz09}
for discussion. Some of these algorithms also minimize bandwidth
and latency for arbitrarily many levels of memory hierarchy.
Little seems to be known about the attainability of this lower
bound for general sparse matrices.

\ignore{
I don't immediately see how to determine if this is attainable;
I suspect it is not in general.
Consider the case most analogous to the dense case, where the
$n$-by-$n$ sparse matrices
$A$, $B$ and $C$ all have uniformly distributed nonzero patterns,
so that $b$-by-$b$ subblocks all have about the same number
of nonzeros, and where multiplying any two such subblocks
takes about the same number of flops. If we use the
cache-aware approach analogous to the dense case, choosing
$b$ so the number $\approx (b/n)^2 \cdot (nnz(A)+nnz(B)+nnz(C))$ of nonzeros
in any $b$-by-$b$ subblock of $A$, $B$ and $C$ is about $M$:
$b \approx n \cdot \sqrt{M / (nnz(A)+nnz(B)+nnz(C))}$, and similarly for
$B$ and $C$, and multiplying all $(n/b)^3$ pairs of
$b$-by-$b$ submatrices in fast memory, we get a number
of memory references of about $(n/b)^3 \cdot M =
(nnz(A)+nnz(B)+nnz(C))^{3/2}/\sqrt{M}$.
This is close to $T/\sqrt{M}$ if the matrices are dense,
but may not be in general. For example, consider the
case where $A$ and $B$ may be written as $S \otimes T$, where
$S$ is dense and $T$ is block diagonal.
} 

Now we consider the parallel case, with $P$ processors. Let
$nnz(A)$ be the number of nonzero entries of $A$; then $NNZ =
nnz(A) + nnz(B) + nnz(C)$ is a lower bound on the total memory
required to store the inputs and outputs. We need to make some
assumption about how this data is spread across processors (each
of which has its own memory), since if $A$, $B$ and $C$ were all
stored in one processor, and all arithmetic done there (i.e., no
parallelism at all), then no communication would be needed. So we
assume that each processor stores an equal share $NNZ/P$ of this
data, and perhaps at most $o(NNZ/P)$ more words, a kind of
memory-balance or memory-scalability assumption. Also, at least
one processor must perform at least $G/P$ multiplications, where
$G$ is the total number of multiplications. Combining all this
with Theorem~\ref{thm:Main} yields\footnote{We present the
conclusions for the parallel model in asymptotic notation. One
could instead assume that each processor had memory of size $M=\mu
\cdot \frac{n^2}{P}$ for some constant $\mu$, and obtain the
hidden constant of the lower bounds as a function of $\mu$, as
done in \cite{IronyToledoTiskin04}.}


\begin{corollary}
\label{cor:ParMM} Suppose we have a parallel algorithm on $P$
processors for multiplying matrices $C = A \cdot B$ that is
memory-balanced in the sense described above. Then at least one
processor must communicate $\Omega \lt( G/\sqrt{ P \cdot
NNZ} - NNZ/P \rt)$ words, where $G$ is the number of
multiplications $A_{ij} \cdot B_{kj}$ performed. In the special
case of dense $n$-by-$n$ matrices, this lower bound is $\Omega
\lt( n^2/\sqrt{ P} \rt)$.
\end{corollary}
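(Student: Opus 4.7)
The plan is to reduce the parallel bound to the sequential bound of Theorem~\ref{thm:Main} applied to a single processor, using the memory-balance hypothesis to control the effective local memory size $M$ and a pigeonhole argument to guarantee enough arithmetic on that processor.

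First, I would fix a processor $p^\ast$ that performs at least $G/P$ of the scalar multiplications $A_{ik}\cdot B_{kj}$; such a processor exists by averaging since the $G$ multiplications are distributed across $P$ processors. Next, I would use the memory-balance hypothesis to bound $p^\ast$'s local memory by $M = O(NNZ/P)$: by assumption each processor holds an equal share $NNZ/P$ of the input/output data, plus at most $o(NNZ/P)$ additional words, so the total local memory available to $p^\ast$ (including workspace) is $\Theta(NNZ/P)$. From the perspective of $p^\ast$, the parallel communication model matches the sequential two-level model of Theorem~\ref{thm:Main}, with ``fast memory'' being the local memory of $p^\ast$ (of size $M$) and ``slow memory'' being the rest of the machine reachable through the interconnect; every word of inter-processor traffic involving $p^\ast$ is either a load into or a store from its local memory.

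Having set this up, I would apply Theorem~\ref{thm:Main} directly to $p^\ast$ with $G_{\text{local}} \ge G/P$ evaluations of $g_{ijk}$ and $M = \Theta(NNZ/P)$, obtaining a lower bound on the words communicated by $p^\ast$ of
\[
\Omega\!\left(\frac{G/P}{\sqrt{NNZ/P}} - \frac{NNZ}{P}\right)
= \Omega\!\left(\frac{G}{\sqrt{P\cdot NNZ}} - \frac{NNZ}{P}\right),
\]
which is the claimed bound. For the dense $n$-by-$n$ specialization I would substitute $G = n^3$ and $NNZ = 3n^2$, yielding $\Omega(n^3/\sqrt{P\cdot n^2} - n^2/P) = \Omega(n^2/\sqrt{P})$ once the $n^2/P$ term is absorbed (it is dominated by $n^2/\sqrt P$ for $P \ge 1$).

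The main obstacle is the modeling step rather than the arithmetic: one must justify that the conventional matrix-multiplication computation on $p^\ast$ really fits the hypotheses of Theorem~\ref{thm:Main}, in particular that there are no S2/D2 arguments (each scalar product $A_{ik}\cdot B_{kj}$ counted in $G/P$ is associated with actual operands that must reside in local memory when the product is formed), and that the memory-balance bound genuinely limits the \emph{instantaneous} fast-memory footprint and not merely the average. The first point is immediate for explicitly stored matrix multiplication; the second requires only that the ``$o(NNZ/P)$ extra words'' assumption be interpreted as a uniform-in-time bound, which is the natural reading used in \cite{IronyToledoTiskin04} and which I would state explicitly before invoking Theorem~\ref{thm:Main}.
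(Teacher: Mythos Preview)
Your proposal is correct and matches the paper's own argument essentially line for line: the paper likewise fixes a processor doing at least $G/P$ multiplications by pigeonhole, bounds its local memory by $M=\Theta(NNZ/P)$ via the memory-balance assumption, and applies Theorem~\ref{thm:Main} to that processor viewed as a sequential machine with the network as slow memory. Your added discussion of the S2/D2 hypothesis and the uniform-in-time reading of the memory bound makes explicit what the paper leaves implicit, but the strategy is identical.
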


There are again well-known algorithms that attain the bandwidth
and latency lower bounds in the dense case, but not in the sparse
case.

We next extend Theorem~\ref{thm:Main} beyond matrix
multiplication. The simplest extension is to the so-called BLAS3
(Level-3 Basic Linear Algebra Subroutines
\cite{BLAS,newBLASfull,newBLAS}), which include related operations
like multiplication by (conjugate) transposed matrices, by
triangular matrices and by symmetric (or Hermitian) matrices. The
last two corollaries apply to these operations without change (in
the case of $A^T \cdot A$ we use the fact that
Theorem~\ref{thm:Main} makes no assumptions about the matrices
being multiplied not overlapping).

More interesting is the BLAS3 operation TRSM,
computing $C = A^{-1} B$ where $A$ is triangular. The inner loop
of the algorithm (when $A$ is upper triangular) is
\begin{equation}
\label{eqn:TRSM}
C_{ij} = (B_{ij} - \sum_{k=i+1}^n A_{ik} \cdot C_{kj} ) / A_{ii}
\end{equation}
which can be executed in any order with respect to $j$, but only
in decreasing order with respect to $i$. None of this matters for
the lower bound, since equation~(\ref{eqn:TRSM}) still matches
Equation~(\ref{eqn:Model}), so the lower bounds apply. Sequential
algorithms that attain these bounds for dense matrices, for
arbitrarily many levels of memory hierarchy, are discussed in
\cite{BallardDemmelHoltzSchwartz09}.

We note that our lower bound also applies to the so-called
Level 2 BLAS (like matrix-vector multiplication) and Level 1 BLAS
(like dot products), but the larger lower bound \#inputs + \#outputs
is attainable.

\subsection{LU factorization}
\label{sec:LU}

Independent of sparsity and
pivot order, the formulas describing LU factorization are
as follows, with the understanding the summations may be over
some subset of the indices $k$ in the sparse case, and
pivoting has already been incorporated in the interpretation
of the indices $i$, $j$ and $k$.
\begin{eqnarray}
\label{eqn:LUModel}
L_{ij}  &=&  (A_{ij} - \sum_{k < j} L_{ik} \cdot U_{kj})/U_{jj}
\; \; {\rm for} \; i>j \\
U_{ij}  &=&  A_{ij} - \sum_{k<i} L_{ik} \cdot U_{kj}
\; \; {\rm for} \; i \leq j \nonumber
\end{eqnarray}
It is easy to see that these formulas correspond to our model in
Equation~(\ref{eqn:Model}), with $g_{ijk}$ identified with
multiplying $L_{ik} \cdot U_{kj}$. The fact that the ``outputs''
$L_{ij}$ and $U_{ij}$ can overwrite the inputs does not matter,
and the subtraction from $A_{ij}$ and division by $U_{jj}$ are all
accommodated by Equation~(\ref{eqn:Model}).

With this in mind, these formulas are also general enough to
accommodate incomplete LU (ILU) factorization \cite{Saad96} where
some entries of $L$ and $U$ are omitted in order to speedup the
computation. If we model an ILU implementation that is {\em
threshold based}, i.e., one that computes a possible nonzero entry
$L_{ij}$ or $U_{ij}$ and compares it to a threshold, storing it
only if it is larger than the threshold and discarding it
otherwise, then we should not count the multiplications that led
to the discarded output. Thus we see that analogs of
Corollaries~\ref{cor:SeqMM} and \ref{cor:ParMM} apply to LU and
ILU as well.

\ignore{
\begin{corollary}
\label{cor:SeqLU}
$G/\sqrt{8M} - M$ is the bandwidth lower bound for computing
the (complete or incomplete) LU factorization of a matrix on a sequential machine,
where $G$ is the
number of multiplications performed when multiplying $L_{ik} \cdot U_{kj}$,
and $M$ is the fast memory size.
In the special case of the complete factorization of a dense $n$-by-$n$
matrix, this lower bound is $n^3/\sqrt{72M} - n^2/\sqrt{32M} + n/\sqrt{288M} - M$.
\end{corollary}
}


\ignore{
\begin{corollary}
\label{cor:ParLU}
Suppose we have a parallel algorithm on $P$
processors for computing the complete or (non-threshold-based) incomplete LU factorization.
Let $NNZ = nnz(L) + nnz(U)$ be the size of the output where we assume
each processor stores an equal share of this data $NNZ/P$.
Then at least one processor must communicate at least
$(G/P)/\sqrt{8 \cdot NNZ/P} - NNZ/P = G/\sqrt{8 \cdot P \cdot NNZ} - NNZ/P$
words, where $G$ is the
number of multiplications performed when multiplying $L_{ik} \cdot U_{kj}$.
In the special case of the complete factorization of a dense $n$-by-$n$
matrix, this lower bound is $n^2/\sqrt{72P} - n/ \sqrt{32P} + 1/\sqrt{288P} - n^2/P$.
\end{corollary}
}

A sequential dense LU algorithm that attains this bandwidth lower bound is
given by \cite{Toledo97}, although it does not always attain the
latency lower bound \cite{DemmelGrigoriHoemmenLangou08a}.
The conventional parallel dense LU algorithm implemented in
ScaLAPACK \cite{SCALAPACK} attains the bandwidth lower bound
(modulo an $O(\log P)$ factor), but not the latency lower bound.
A parallel algorithm that attains both lower bounds (again modulo a factor $O(\log P)$)
is given in \cite{DemmelGrigoriXiang08}, where significant speedups are reported.
Interestingly, it does not seem possible
to attain both lower bounds and retain conventional partial pivoting,
a different (but still stable) kind of pivoting is required.
We also know of no dense sequential LU algorithm that minimizes
bandwidth and latency across multiple levels of a memory hierarchy
(unlike Cholesky).
There is an elementary reduction proof that dense LU factorization
is ``as hard as dense matrix multiplication'' \cite{DemmelGrigoriHoemmenLangou08a},
but it does not address sparse or incomplete LU, as does our approach.

\subsection{How to carefully count operations $g_{ijk}$
}\label{sec:careful}

Using only structural information, e.g., about the sparsity
patterns of the underlying matrices, it is sometimes possible to
deduce that the computed result $f_{ij}( \cdot )$ is exactly zero,
and so to possibly avoid a memory reference to location $c(i,j)$
to store the result. This may either be because the values
$g_{ijk}( \cdot )$ being accumulated to compute $f_{ij}$ are all
identically zero, or, more interestingly, because it is possible
to prove there is exact cancellation (independent of the values of
the nonzero arguments $Mem(a(i,k))$ and $Mem(b(k,j))$); we give
an example of this below.
In these cases it is possible to imagine algorithms that (1) pay
no attention to these possibilities and simply load, store and
compute with zeros (e.g., a ``dense algorithm'' applied to a
sparse matrix), or (2) recognize that zeros will be computed and
avoid doing any memory traffic or arithmetic to compute them, or
(3) do the work of computing the zero entry, recognize it is zero
(perhaps by comparing to a tiny threshold), and do not bother
storing it. In this last case one might worry that our lower
bounds are too high. However, such operations would not count
toward our lower bound, because they do not satisfy
Equation~(\ref{eqn:Model}), since they do not lead to a write to
$Mem(c(i,j))$. This may undercount the actual number of memory
operations, but does not prevent our lower bound from being a
right lower bound.

Here is an example to illustrate that counting the number of
$g_{ijk}$ to get a true lower bound requires care. This is
because, as suggested above, it is
possible for an LU algorithm to infer from the sparsity pattern of
$A$ that some (partial) sums in equation (\ref{eqn:LUModel}) are
zero and so avoid computing them. For example, consider a matrix
$A$ that is nonzero in its first $r$ rows and columns, and
possibly in the trailing $(n-2r)$-by-$(n-2r)$ submatrix; call this
submatrix $A'$. First suppose $A'=0$, so that $A$ has rank at most
$2r$, and that pivots are chosen along the diagonal. It is easy to
see that the first $2r-1$ steps of Gaussian elimination will
generically fill in the entire matrix with nonzeros, but that step
$2r$ will cause cancellation to zero (in exact arithmetic) in all
entries of $A'$. If $A'$ starts as a nonzero sparse matrix, then
this cancellation will not be to zero but to the sparse LU
factorization of $A'$ alone. So one can imagine an algorithm that
may or may not recognize this opportunity to avoid work in some or
all of the entries of $A'$. To accommodate all these
possibilities, we will, as stated above,
only count those multiplications in
(\ref{eqn:LUModel}) that contribute to a result $L_{ij}$ or
$U_{ij}$ that is stored in memory. The discussion of this
paragraph also applies to QR factorization.

\subsection{Cholesky Factorization}

Now we consider Cholesky factorization. Independent of sparsity and
(diagonal) pivot order, the formulas describing Cholesky factorization are
as follows, with the understanding the summations may be over
some subset of the indices $k$ in the sparse case, and
pivoting has already been incorporated in the interpretation
of the indices $i$, $j$ and $k$.
\begin{eqnarray}
L_{jj} & = & (A_{jj} - \sum_{k < j} L_{jk}^2)^{1/2} \\
L_{ij} & = & (A_{ij} - \sum_{k<j} L_{ik} \cdot L_{jk})/L_{jj}
\; \; {\rm for} \; i > j \nonumber
\end{eqnarray}
It is easy to see that these formulas correspond to our model in
Equation~(\ref{eqn:Model}), with $g_{ijk}$ identified with
multiplying $L_{ik} \cdot L_{jk}$. As before, the fact that the
``outputs'' $L_{ij}$ can overwrite the inputs does not matter, and
the subtraction from $A_{ij}$, division by $L_{ii}$, and square
root are all accommodated by Equation~(\ref{eqn:Model}). As
before, these formulas are general enough to accommodate
incomplete Cholesky (IC) factorization \cite{Saad96}. \looseness
-1000

Dense algorithms that attain these lower bounds are discussed in
\cite{BallardDemmelHoltzSchwartz09}, both parallel and sequential,
including analyzing one that minimizes bandwidth and latency
across all levels of a memory hierarchy
\cite{AhmedPingali00}.
We note that there was a proof in
\cite{BallardDemmelHoltzSchwartz09} showing that dense Cholesky
was ``as hard as dense matrix multiplication'' by a method
analogous to that for LU.

\subsubsection{Sparse Cholesky Factorization on Matrices whose Graphs are Meshes}

Hoffman, Martin, and Rose \cite{HMR73} and George \cite{George73}
prove that a lower bound on the number of multiplications required
to compute the sparse Cholesky factorization of an $n^2$-by-$n^2$
matrix representing a 5-point stencil on a 2D grid of $n^2$ nodes
is $\Omega(n^3)$. This lower bound applies to any matrix
containing the structure of the 5-point stencil. This yields:

\begin{corollary}
\label{cor:sparseChol} In the case of the sparse Cholesky
factorization of the matrix representing a
5-point stencil on a two-dimensional grid of $n^2$ nodes, the bandwidth
lower bound is $\Omega\lt(\frac{n^3}{\sqrt M} \rt)$.
\end{corollary}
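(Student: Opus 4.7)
The plan is to apply Theorem~\ref{thm:Main} directly, using the Cholesky formulas already exhibited in the preceding subsection together with the classical multiplication count of Hoffman--Martin--Rose and George. First, I would observe that the sparse Cholesky factorization of any symmetric positive definite matrix fits the model of Equation~(\ref{eqn:Model}), with $g_{ijk}$ identified as the scalar product $L_{ik}\cdot L_{jk}$; this identification was already made in the previous subsection and survives unchanged in the sparse case, since the theorem makes no density assumption on the matrices and allows outputs to overwrite inputs.

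Next, I would invoke the cited result of \cite{HMR73,George73}, which asserts that any sparse Cholesky factorization of an $n^2 \times n^2$ matrix whose nonzero pattern contains the 5-point stencil on an $n \times n$ grid must perform $\Omega(n^3)$ multiplications of the form $L_{ik}\cdot L_{jk}$ whose results contribute to a nonzero entry actually stored in $L$. In the language of Section~\ref{sec:careful}, this is precisely the count $G$ of evaluations of $g_{ijk}$ that contribute to an output write, so it is legitimate to plug into Theorem~\ref{thm:Main}.

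Finally, I would substitute $G = \Omega(n^3)$ into the bound of Theorem~\ref{thm:Main} to obtain a bandwidth lower bound of $\frac{G}{8\sqrt{M}} - M = \Omega\!\left(\frac{n^3}{\sqrt{M}}\right) - M$, and note that the subtracted $M$ term is absorbed into the $\Omega$ whenever the bound is nontrivial (that is, whenever $n^3/\sqrt{M}$ is the dominant term; otherwise the trivial \#inputs + \#outputs bound applies, as remarked after Theorem~\ref{thm:Main}).

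The only subtle point — and the ``hard part'' insofar as there is one — is ensuring that the multiplications counted by \cite{HMR73,George73} are exactly those permitted by the careful accounting of Section~\ref{sec:careful}: multiplications whose result is ultimately stored, rather than multiplications that might be computed and then discarded due to cancellation. Since the Hoffman--Martin--Rose/George bound is a lower bound on the multiplications \emph{required} for any correct factorization, every such multiplication must contribute to an entry of $L$ that is written, so it qualifies as an evaluation of $g_{ijk}$ in the sense of our model. With that observation the corollary follows immediately from Theorem~\ref{thm:Main}.
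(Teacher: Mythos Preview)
Your proposal is correct and follows essentially the same approach as the paper: cite the Hoffman--Martin--Rose/George $\Omega(n^3)$ multiplication lower bound and plug it into Theorem~\ref{thm:Main} via the Cholesky instantiation of Equation~(\ref{eqn:Model}). The paper's own argument is in fact terser than yours---it simply states the $\Omega(n^3)$ flop count and says ``This yields'' the corollary---so your additional care about the Section~\ref{sec:careful} accounting is a welcome elaboration rather than a departure.
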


George \cite{George73} shows that this arithmetic lower bound is
attainable with a nested dissection algorithm in the case of the
5-point stencil.
Gilbert and Tarjan \cite{GilbertTarjan87} show that
the upper bound also applies to a larger class of structured
matrices, including matrices associated with planar graphs.


\subsection{$LDL^T$ Factorization}
We next show that analogous lower bounds apply to the symmetric
indefinite factorization \linebreak $A = LDL^T$, where $D$ is
block diagonal with 1-by-1 and 2-by-2 blocks, and $L$ is a lower
triangular matrix with 1 on its diagonal elements. If $A$ is
positive definite then all the blocks of $D$ are 1-by-1. It
sufficient to consider the lower bound for the positive definite
case, as any $LDL^T$ decomposition algorithm for the general case
has to deal with this case as well. Independent of sparsity and
(diagonal) pivot order, the formulas describing Cholesky
factorization are as follows, with the understanding the
summations may be over some subset of the indices $k$ in the
sparse case, and pivoting has already been incorporated in the
interpretation of the indices $i$, $j$ and $k$.
\begin{eqnarray}
D_{jj} & = & A_{jj} - \sum_{k < j} L_{jk}^2 D_k \\
L_{ij} & = & \frac{1}{D_{jj}}\lt(A_{ij} - \sum_{k<j} L_{ik} \cdot
L_{jk}D_{kk}\rt) \; \; {\rm for} \; i > j
\end{eqnarray}
It is easy to see that these formulas correspond to our model in
Equation~(\ref{eqn:Model}), with $g_{ijk}$ identified with
multiplying $L_{ik} \cdot L_{jk}$, thus the lower bounds apply to
$LDL^T$ decomposition as well. As in the Cholesky case, the only
difference is that the number of multiplications $G$ is about half
as large ($n^3/6 + O(n^2)$), as is the memory size $NNZ =
n(n+1)/2$.

\section{Applying Orthogonal Transformations}
\label{sec:QR}

In this section we consider algorithms that compute and apply
sequences of orthogonal transformations to a matrix, which
includes the most widely used algorithms for least squares
problems (the QR factorization), eigenvalue problems, and the SVD.
We need to treat algorithms that apply orthogonal transformations
separately because Loomis-Whitney alone is not enough to bound the
number of arithmetic operations that can occur in a segment.

\subsection{The QR Factorization}
The QR factorization of a rectangular matrix $A$ is more subtle to
analyze than LU or Cholesky, because there is more than one way to
represent the $Q$ factor (e.g., Householder reflections and Givens
rotations), because the standard ways to reorganize or ``block''
QR to minimize communication involve using the distributive law,
not just summing terms in a different order
\cite{BischofVanLoan87,Schreiber87a,Puglisi92,Demmel97,GolubVanloan96},
and because there may be many intermediate terms that are
computed, used, and discarded without causing any slow memory
traffic. This forces us to use a different argument than
Loomis-Whitney to bound the number of arithmetic operations in a
segment.

To be concrete, we consider the widely used Householder reflections,
in which an $n$-by-$n$ elementary real orthogonal matrix
$Q_i$ is represented as $Q_i = I - \tau_i u_i u_i^T$,
where $u_i$ is a column vector called a Householder vector,
and $\tau_i = 2/\|u_i\|_2^2$.
A single Householder reflection $Q_i$ is chosen so that
multiplying $Q_i \cdot A$
zeros out selected rows in a particular column of $A$,
and modifies one other row in the same column
(for later use, we let $r_i$ be index of this other row).

Once entries in a column are zeroed out, they are not operated on
again, and remain zero; this fact will be critical to our later
counting argument. This means that a sequence of $k$ Householder
reflections is chosen to zero out a common set of selected rows of
a sequence of $k$ columns of $A$ by forming $Q_k \cdot Q_{k-1}
\cdots Q_1 \cdot A$. Each $Q_i$ zeros out a subset of the rows
zeroed out in the previous column (so that once created, zeros are
preserved). We furthermore model the way libraries like LAPACK
\cite{LAPACK} and ScaLAPACK \cite{SCALAPACK} may ``block''
Householder vectors, writing $Q_k \cdots Q_1 = I - U_k T_k U_k^T$,
where $U_k = [u_1, u_2 , \ldots , u_k]$ is $n$-by-$k$, and $T_k$
is $k$-by-$k$. $U_k$ is nonzero only in the rows being modified,
and furthermore column $i$ of $U_k$ is zero in entries
$r_1$,...,$r_{i-1}$ and nonzero in entry $r_i$. (In conventional
algorithms for dense matrices this means $r_i=i$, and $U_k$ is
lower trapezoidal with nonzero diagonal.)
Furthermore $T_i$, which may be computed recursively from $T_{i-1}$,
is lower triangular with nonzero diagonal.
Our lower bound considers all possible sequences of Householder
transformations that preserve previously created zeros,
and all possible ways to collect them into blocks.

\ignore{ We must also assume the algorithm does no ``useless''
arithmetic involving multiplying by or adding numbers it knows to
be zero for structural reasons (i.e., the sparsity pattern of
$A$), so that they require no memory traffic to get; otherwise the
lower bound, which is proportional to \#flops, could be
artificially large. {\em This probably needs to be assumed
elsewhere as well.} This does not preclude storing and doing
arithmetic on explicitly zeros, i.e., not using any knowledge that
they are zero (this is done in some optimizations of sparse matrix
algorithms, to simplify indexing \cite{OSKI}).
} 

Next, we will apply such block Householder
transformations to a (sub)matrix by inserting parentheses as follows:
$(I - U \cdot T \cdot U^T) \cdot A = A - U \cdot (T \cdot U^T \cdot A) \equiv A - U \cdot Z$,
which is also the way Sca/LAPACK does it.
Finally, we overwrite the output onto $A = A - U \cdot Z$, which
is how all fast implementations do it, analogously to LU decomposition,
to minimize memory requirements.

But we do not need to assume any more commonality with the
approach in Sca/LAPACK, in which a vector $u_i$ is chosen to zero
out all of column $i$ of $A$ below the diagonal. For example, we
can choose each Householder vector to zero out only part of a
column at a time, as is the case with the algorithms for dense
matrices in
\cite{DemmelGrigoriHoemmenLangou08a,DemmelGrigoriHoemmenLangou08b}.
Nor do we even need to assume we are zeroing out any particular
set of entries, such as those below the main diagonal as the usual
QR algorithm; later this generality will let us apply this result
to algorithms for eigenproblems and the SVD. As stated before, all
we assume is that the algorithm ``makes progress'' in the sense
that a Householder transformation in column $j$ is not allowed to
fill in zeros that were deliberately created by other Householder
transformations in columns $k<j$.

To get our lower bound, we consider just the multiplications in
all the different applications of block Householder
transformations $A = A - U \cdot Z$. We argue in Section
\ref{sec:QR-discussion} that this constitutes a large fraction of
all the multiplications in the algorithm. \ignore{ Here is what we
mean by ``genericity'' which we use in several ways. In the
simplest sense, it means that if the algorithm computes a value by
summing other values, we assume that the resulting value is
treated as nonzero, i.e., that the algorithm does not test
computed results for accidental cancellation and so avoid
subsequent memory traffic. This assumption is more subtle than it
first seems: consider an $n$-by-$n$ matrix that is nonzero only in
its leading $r$ rows and columns, so that it is of rank at most
$2r$. The first step of the conventional QR algorithm would (like
the first step of LU) completely fill in all the nonzero entries
of the matrix, and they would remain nonzero (generically) for the
first $2r-1$ QR steps. But at step $2r$ the trailing $n-2r$ rows
of the matrix will cancel to zero (in exact arithmetic). So one
could imagine (1) an algorithm that pays no attention to this
(i.e., treats the matrix as dense), or (2) an algorithm that
recognizes the sparsity structure means that it does not need to
ever do any work to update the trailing $n-2r$ rows, or (3) an
algorithm that does the work of updating the trailing $n-2r$ rows,
but applies a threshold criterion to decide that they are small
enough after step $2r$ to set to zero, and so stops early, or (4)
some strange hybrid of these options applied to different parts of
the matrix. To simplify our bookkeeping below, our ``genericity''
assumption will assume either that we do not use the third type of
algorithm above, or that we do not count toward our lower bound
the flops that lead to an operand that is eventually discarded.
{\em (Perhaps we only need this discussion once, to cover LU and
QR?) }
} 
There are two challenges to straightforwardly applying our
previous approach to the matrix multiplications in all the updates
$A = A - U \cdot Z$. The first challenge is that we need to
collect all these multiplications into a single set indexed in an
appropriate one-to-one fashion by $(i,j,k)$. The second challenge
is that $Z$ need not be read from memory, rather it may be
computed on-the-fly from $U$ and $A$ and discarded without
necessarily ever being read or written from memory. So we have to
account for its memory traffic more carefully. Furthermore, each
Householder vector (column of $U$) is created on the fly by
modifying certain (sub)columns of $A$, so it is both an output and
an input. So we will have to account for the memory operations for
$U$ and $Z$ more carefully.

Here is how we address the first challenge: Let index $k$ indicate
the number of the  Householder vector; in other words $U(:,k)$ are
all the entries of $k$-th Householder vector (the ordering is
arbitrary). Thus $k$ is not the column of $A$ from which $U(:,k)$
arises (there may be many Householder vectors associated with a
given column as in \cite{DemmelGrigoriHoemmenLangou08a}) but $k$
does uniquely identify that column. Then the operation $A - U
\cdot Z$ may be rewritten as $A(i,j) - \sum_k U(i,k)\cdot Z(k,j)$,
where the sum is over the Householder vectors $k$ making up $U$
that both lie in column $j$ and have entries in row $i$. The use
of this index $k$ lets us combine all the operations $A = A - U
\cdot Z$ for all different Householder vectors into one collection
\begin{equation}
\label{eqn:UZmodel}
A(i,j) = A(i,j) - \sum_k U(i,k) \cdot Z(k,j)
\end{equation}
where all operands $U(i,k)$ and $Z(k,j)$ are uniquely labeled by
the index pairs $(i,k)$ and $(k,j)$, resp.

For the second challenge, we revisit the model of
section~\ref{sec:MainResult}, where we distinguished
arguments by their sources (S1 or S2) and destinations
(D1 or D2). Unlike the algorithms in section~\ref{sec:Consequences},
we now have the possibility of S2/D2 arguments, $Z(k,j)$, which
are created during the segment and then discarded without causing
any slow memory traffic. To bound the number of S2/D2 arguments,
we need to exploit the mathematical structure of Householder
transformations.

\ignore{
For the second challenge, since input arguments are not
necessarily read from memory, we need a more complicated model
than Equation~(\ref{eqn:Model}), where the only sources of input
arguments during a segment of $M$ memory operations are either
being left behind in fast memory (at most $M$) or being read from
memory (at most $M$). To this end, we consider each input or
output operand of (\ref{eqn:UZmodel}) that appears in fast memory
during a segment of $M$ slow memory operations. It may be that an
operand appears in fast memory for a while, disappears, and
reappears, possibly several times (we assume there is at most one
copy at a time; this obviously is consistent with obtaining a
lower bound). For each period of continuous existence of an
operand in fast memory, we label its Source (how it came to be in
fast memory) and its Destination (what happens when it
disappears):
\begin{itemize}
\item
{\bf Source S1:} The operand was already in fast memory at the beginning of the
segment, and/or read from slow memory.
There are at most $2M$ such operands altogether,
because the fast memory has size $M$, and because a segment
contains at most $M$ reads from slow memory.
\item
{\bf Source S2:} The operand is computed (created) during the segment.
Without more information,
there is no bound on the number of such operands.
\item
{\bf Destination D1:} An operand is left in fast memory at the end of the
segment (so that it is available at the beginning of the next one),
and/or written to slow memory.
There are at most $2M$ such operands altogether,
again because the fast memory has size $M$, and because a segment
contains at most $M$ writes to slow memory.
\item
{\bf Destination D2:} An operand is {\em neither} left in fast memory nor written to
slow memory, but simply discarded.
Again, without more information,
there is no bound on the number of such operands.
\end{itemize}

We may correspondingly label each period of continuous existence
of any operand in fast memory during one segment by one of four
possible labels Si/Dj, indicating the Source and Destination of the operand
at the beginning and end of the period. Based on the above description,
the total number of operands of all types except S2/D2 is bounded by $2M$.
The S2/D2 operands, those created during the segment and then discarded
without causing any slow memory traffic, cannot be bounded without
further information, which we will obtain by exploiting the mathematical
structure of Householder transformations.

} 

Now let us consider the possible labels Si/Dj for the arguments
in model~(\ref{eqn:UZmodel}). We assume each
required value is only computed once (reflecting
practical implementations).
\begin{description}
\item[$A(i,j)$:] Every $A(i,j)$ operand is destined either to be output
(eg as an entry of the $R$ factor) or converted into a Householder
vector. So the only possible S2/D2 operands from $A$ are (sub)columns
that become Householder vectors, and hence become S2 operands of $U$.
We bound the number of these as follows.
\item[$U(i,k)$:] All $U$ operands are eventually output, so
there are no D2 operands of $U$ (recall that we may only compute
each result $U(i,k)$ once, so it cannot be discarded). So all S2
operands $U(i,k)$ are also D1, and so there are at most $2M$ of
them. This also bounds the number of S2/D2 operands $A(i,j)$, and
so bounds the total number of $A(i,j)$ operands by $4M$.
\item[$Z(k,j)$:] There are possible S2/D2 operands, as many as the
product of the number of columns $U(:,k)$ and the number of
columns $A(:,j)$ that can reside entirely in fast memory during
the segment.
\end{description}

Unlike LU and matrix multiplication, the number of $Z(k,j)$
arguments is {\em not} bounded by $4M$, so we cannot hope to apply
Loomis-Whitney to bound the number of arithmetic operations, see
Section~\ref{sec:QR-discussion} for an example. So our proof
has to try to bound the number of arithmetic operations possible
in a segment without using Loomis-Whitney. We do this by
exploiting the mathematical structure of Householder
transformations, and the $O(M)$ bound on the number of $A(i,j)$
and $U(i,k)$ entries in fast memory during a segment, to still
bound the maximum number of multiplies during a segment by
$O(M^{3/2})$, which is all we need to get our ultimate lower
bound.

In the case that all $Z(k,j)$ are S2/D2 operands, according to (\ref{eqn:UZmodel}),
the maximum number of arithmetic operations
for column $j$ of $A$ is the number of $U(i,k)$ entries, $2M$,
if every possible $Z(k,j)$ is nonzero. So the maximum total
number of arithmetic operations is $2M$ times the maximum number
of columns of $A$ that can be present. The maximum number of
columns of $A$ is in turn the total number of $A(i,j)$ entries
that can be present ($4M$) divided by the minimum number of
rows present from each column of $A$. So the question is: what is
the fewest number of rows in which we can pack $2M$ Householder
vector entries? We need to rely on the fact that these Householder
vectors must satisfy the dependency relationship of QR, that previously
created zeros in previous columns are preserved. So the fewest rows
are touched when the nonzero Householder vector entries in each
column lie in a strict subset of the rows occupied by nonzero Householder vector
entries in previous columns (this follows by induction on columns
from right to left). Note that if the matrix is sparse, these
Householder vectors are not necessarily in adjacent columns.
So if there are $r$ nonzero Householder vector
entries in the leftmost occupied column, there can be at most $r-1$
in the next occupied column, and so on,
and so at most $r(r+1)/2$ nonzero Householder
vector entries altogether in $r$ rows. So if there are $0 < h \leq 2M$
nonzero Householder vector entries altogether, residing in $r$ rows,
then $r(r+1)/2 \geq h$ or $r \geq 2 h^{1/2} -1 \geq h^{1/2}$.
Then the maximum number $c$ of columns of $A$ that can be present is
bounded by $r \cdot c \leq 4M$ or
$c \leq  4M/r \leq 4M/h^{1/2}$, and the maximum number of
multiplications that can be performed is
$h \cdot c \leq  h \cdot 4M/h^{1/2} = 4Mh^{1/2} \leq \sqrt{32M^3}$,
as desired.

In the case that some $Z(k,j)$ are S2/D2 operands and some are not, we
must use Loomis-Whitney and the above argument to bound the number
of multiplies within a segment.  Since there are no more that $4M$ non-S2/D2
$Z(k,j)$ operands in a segment, the Loomis-Whitney argument bounds the
number of multiplies involving such operands by $(32M^3)^{1/2}$, so with
the above argument, the total number of multiplies is less than $\sqrt{128M^3}$.

The rest of the proof is similar to before: A lower bound on the
number of segments is then $\lfloor {\rm \#multiplies}
/(128M^3)^{1/2} \rfloor \geq {\rm \#multiplies} /(128M^3)^{1/2}
-1$, so a lower bound on the number of slow memory accesses is $M
\cdot \lfloor {\rm \#multiplies} /(128M^3)^{1/2} \rfloor \geq {\rm
\#multiplies} /(128M)^{1/2} -M$. For dense $m$-by-$n$ matrices
with $m \geq n$, the conventional algorithm does $\Theta (mn^2)$
multiplies.
Altogether, this yields the following theorem:
\begin{theorem}
\label{thm:QR} $G/\sqrt{128M} - M$ is the bandwidth lower bound
for computing the QR factorization of a matrix on a sequential
machine, where $Q$ is computed as a product of (arbitrarily
blocked) Householder transformations, and $G$ is the number of
multiplications performed when updating $A(i,j) = A(i,j) + U(i,k)
\cdot Z(k,j)$, i.e., adding multiples of Householder vectors to
the matrix. In the special case of a dense $m$-by-$n$ matrix with
$m \geq n$, this lower bound is $\Omega (mn^2 / \sqrt{M})$.
\end{theorem}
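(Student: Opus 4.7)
The plan is to follow the segment-based framework already used for Theorem~\ref{thm:Main}: partition the trace into segments of $M$ slow-memory operations and bound the number of multiplications executable inside a single segment. The multiplications I count are those appearing in the block update $A(i,j) \leftarrow A(i,j) - U(i,k)\cdot Z(k,j)$ of (\ref{eqn:UZmodel}), indexed by the triple $(i,j,k)$ where $k$ globally names each Householder vector (so $k$ picks both a column of $U$ and a column position in $A$). The main obstacle, compared to matrix multiplication or LU, is that the $Z(k,j)$ entries are allowed to be of type S2/D2 — created from $U$ and $A$ inside the segment and then discarded — so Loomis--Whitney cannot directly project the triples onto three faces of bounded size.

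I would first pin down what the Si/Dj classification forces on $A$ and $U$. Since each $U(i,k)$ is eventually output, every S2 $U$-entry is also D1, so the $U$-entries in a segment number at most $2M$. The only way an $A(i,j)$ can be S2/D2 is for it to be a (sub)column that becomes a Householder vector and is therefore already charged to $U$; combining sources, the $A$-entries number at most $4M$ per segment. This is the easy bookkeeping step.

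The heart of the argument is a combinatorial replacement for Loomis--Whitney that bounds multiplies in the S2/D2 $Z$ case using the structural property that Householder QR preserves previously-created zeros. If $h$ Householder vector entries are live in a segment and occupy a set of $r$ rows, then each successive vector (in column order) must restrict its nonzeros to a strict subset of the rows of the previous one, so the column counts are bounded by $r, r-1, \ldots$ and $h \le r(r+1)/2$, giving $r \ge \sqrt{h}$. Since the $4M$ live $A$-entries split into columns that each must contain at least $r$ live rows to be multiplied, the number of live $A$-columns is at most $4M/r \le 4M/\sqrt{h}$. Multiplying entry count by column count bounds the multiplies in the all-S2/D2 case by $h\cdot 4M/\sqrt{h} = 4M\sqrt{h}\le \sqrt{32\,M^3}$. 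This is the hard step: getting a $\sqrt{M^3}$ bound without a projection inequality, purely from the algebraic "progress" of Householder reflections.

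For the mixed case I split the multiplies into those whose $Z(k,j)$ is \emph{not} S2/D2 (at most $4M$ such $Z$-operands, so Loomis--Whitney on $A,U,Z$ gives $\le \sqrt{(4M)^3}=\sqrt{64M^3}$ actually, but with the tighter $(2M)^3$ for $U,Z$ and $4M$ for $A$ one still recovers $\sqrt{32M^3}$) and those whose $Z(k,j)$ is S2/D2 (bounded above by $\sqrt{32M^3}$), summing to at most $F := \sqrt{128\,M^3}$ multiplications per segment. The standard finish then applies: the number of complete segments is at least $\lfloor G/F\rfloor$, so the total slow-memory traffic is at least $M\lfloor G/F\rfloor \ge G/\sqrt{128M} - M$. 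Specializing to a dense $m\times n$ matrix ($m\ge n$) where the QR update performs $G=\Theta(mn^2)$ multiplications yields the claimed $\Omega(mn^2/\sqrt{M})$ bound.
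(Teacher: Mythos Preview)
Your proposal is correct and follows essentially the same approach as the paper: the Si/Dj accounting giving $2M$ $U$-entries and $4M$ $A$-entries, the combinatorial ``nesting'' argument that $h$ Householder entries force at least $\sqrt{h}$ rows (so $\le\sqrt{32M^3}$ multiplies in the S2/D2-$Z$ case), Loomis--Whitney for the non-S2/D2 $Z$'s, and the combination to $F=\sqrt{128M^3}$ per segment. One minor wording point: the strict-subset row property is the \emph{extremal} configuration minimizing the row count, not a universal constraint on every pair of Householder vectors---but the inequality $h\le r(r+1)/2$ you extract from it is exactly what the paper uses.
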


An analogous result holds for parallel QR.

Regarding related work, a lower bound just for the dense case
appears in \cite{DemmelGrigoriHoemmenLangou08a}, which also
discusses algorithms that meet (some of) these lower bounds, and
shows significant speedups. The proof in
\cite{DemmelGrigoriHoemmenLangou08a} assumed each $Z$ was written
to memory (which we do not here), and so could use the
Loomis-Whitney approach. For example, the sequential algorithm in
\cite{ElmrothGustavson98,Elmroth00}, as well as LAPACK's DGEQRF,
can minimize bandwidth, but not latency. The paper
\cite{DemmelGrigoriHoemmenLangou08a} also describes sequential and
parallel QR algorithms that do attain both bandwidth and latency
lower bounds; this was accomplished by applying block Householder
transformations in a tree-like pattern, each one of which zeroing
out only part of a block column of $A$. We know of no sequential
QR factorization algorithm that minimizes communication across
more than 2 levels of the memory hierarchy.

\ignore{
{\em
Questions: Do you agree with this? Does Loomis-Whitney go
away entirely, or should we still use it for the non-S2/D2 operands?
Should we try to account for the $Z = U^T \cdot A$ operations in the
same analysis, or resort to our previous argument that $Z = U^T \cdot A$
takes no more flops than $U \cdot Z$?

To do eigenvalue problems, e.g., reduction to tridiagonal form, we
need to consider two-sided transformations, where we apply
transformation to the left and the right of the matrix. In the
dense case at least, it is tempting to just ignore the left
transformations, which do as much work as the right, and just
apply our lower bound to the transformations on the right, but
since we are relying on certain properties of one-sided QR, we'd
have to be sure that the alternating transformations from the
other side don't muck up the argument, since we are not just using
Loomis-Whitney anymore.

Finally, does this argument work for LU?
}
} 

\ignore{
We note that we cannot use the fact that
$R^T \cdot R = A^T \cdot A$, so that $R$ is the Cholesky
factor of $A^T \cdot A$, to get a lower bound based on
our lower bound for Cholesky, because QR works very differently
than Cholesky.
For example, when $A$ is $m$-by-$n$ with $m \ll n$, its
$R$ factor will have the same shape and cost $O(mn^2)$ operations
to compute, but $A^T \cdot A$ will be $n$-by-$n$, as will its
Cholesky factor, which would cost $O(n^3)$ to compute.
Similarly, when $A$ is sparse, $A^T \cdot A$ may be much denser.
So there is no simple relationship between the two algorithms.

We note that
a Givens rotation may be replaced by a 2-by-2 Householder reflection,
and so we conjecture that algorithms using Givens rotations are
covered by this analysis.
}

\ignore{
\subsection{Eigenvalue and Singular Value Problems}

Standard algorithms for computing eigenvalues and eigenvectors,
or singular values and singular vectors (the SVD), start by applying
orthogonal transformations to both sides of $A$ to reduce
it to a ``condensed form'' (Hessenberg, tridiagonal or bidiagonal)
with the same eigenvalues or singular values, and simply related
eigenvectors or singular vectors \cite{Demmel97}.
We begin this section by getting communication lower bounds for
these reductions.

We extend our argument from the last section as follows.
We can have some arbitrary interleaving of (block) Householder transformations
applied on the left:
\[
A = (I - U_L \cdot T_L \cdot U_L^T) \cdot A
  = A - U_L \cdot (T_L \cdot U_L^T \cdot A)
  \equiv A - U_L \cdot Z_L
\]
and the right:
\[
A = A \cdot (I - U_R \cdot T_R \cdot U_R^T)
  = A - ( A \cdot U_R \cdot T_R) \cdot U_R^T
  \equiv A - Z_R \cdot U_R^T \; \; .
\]
Combining these, we can write
\begin{equation}
\label{eqn:TwoSidedModel}
A(i,j) = A(i,j) - \sum_{k_L} U_L(i,k_L ) \cdot Z_L(k_L ,j)
                - \sum_{k_R} Z_R(i,k_R ) \cdot U_R(j,k_R )
\end{equation}
Of course there are lots of possible dependencies ignored
here, much as we wrote down a similar formula for LU.
But we do assume that two-sided factorizations ``make progress''
in the same sense as one-sided QR: entries that are zeroed out
remain zeroed out by subsequent Householder transformations,
from the left or right.
By the same argument as the previous section, we can classify the Sources
and Destinations of the components of (\ref{eqn:TwoSidedModel})
as follows:

\begin{description}
\item[$A(i,j)$:] Every $A(i,j)$ operand is destined either to be output
(eg as an entry of the eventual Hessenberg, tridiagonal or bidiagonal matrix)
or converted into a left or right Householder vector.
So the only possible S2/D2 operands from $A$ are (sub)columns or (sub)rows
that become Householder vectors, and hence are become S2 operands of either $U_L$ or $U_R$.
We bound the number of these as follows.
\item[$U_L(i,k_L)$:] All $U_L$ operands are eventually output, so
there are no D2 operands of $U_L$ (recall that we may only compute
each result $U_L(i,k_L)$ once, so it cannot be discarded). So all
S2 operands $U_L(i,k_L)$ are also D1, and so there are at most $2M$ of
them. This also bounds the number of S2/D2 operands $A(i,j)$ that become
S2 operands of $U_L$.
\item[$U_R(j,k_R)$:] This is analogous to $U_L(i,k_L)$. Again,
$2M$ bounds the number of S2/D2 operands $A(i,j)$ that become S2
operands of $U_R$.
\end{description}
Finally, since S2/D2 operands of $A(i,j)$ must either be S2 operands of
$U_L$ or $U_R$, there are at most $4M$ of these.

Thus, in one segment, there can be at most $6M$ entries $A(i,j)$,
$2M$ entries $U_L(i,k_L )$ and $2M$ entries $U_R(j,k_R)$. The same
argument as used for QR says that we can do at most about
$(72M^3)^{1/2}$ multiplications $U_L(i,k_L ) \cdot Z_L(k_L ,j)$
and also about $(72M^3)^{1/2}$ multiplications $Z_R(i,k_R ) \cdot
U_R(j,k_R )$, or $(288M^3)^{1/2}$ in all. The final lower bound on
the number of memory operations is then \linebreak ${\rm
\#multiplies}/(288M)^{1/2} - M$.

This extends our lower bound to any algorithm
that applies any sequence of left/right Householder transformations,
under the restriction of ``making progress'', and so cover
reduction to tridiagonal, bidiagonal or Hessenberg forms.
In all these cases, for dense $n$-by-$n$ matrices,
\#multiplies is a multiple of $n^3$.

\begin{theorem}
\label{thm:2SidedReduction}
$G/\sqrt{288M} - M$ is the bandwidth lower bound for reducing a
matrix to Hessenberg, tridiagonal or bidiagonal form on a sequential machine,
where the reduction is done by multiplying on the left and right by products of
(arbitrarily blocked) Householder transformations,
and $G$ is the number of multiplications
performed when
adding multiples of Householder vectors to the matrix.
In the special case of a dense $n$-by-$n$ matrix,
this lower bound is $\Omega (n^3 / \sqrt{M})$.
\end{theorem}

Again, a result analogous to Corollary~\ref{cor:ParLU} holds for parallel
reductions.

Now we consider the rest of the eigenvalue or singular value problem.
Once a (symmetric) matrix has been reduced to tridiagonal form $T$,
it of course requires much less memory to store, just $O(n)$.
Assuming $M$ is at least a few times larger than $n$,
there are a variety of classical algorithms to compute some or all of $T$'s
eigenvalues also using just $O(n)$ fast memory.
There is also a well-known algorithm \cite{DhillonParlettVoemel06,DhillonParlett04}
(routine xSTEMR in LAPACK \cite{LAPACK})
to compute $T$'s eigenvalues and eigenvectors one-at-a-time using
$O(n^2)$ flops, and requiring $O(n)$ fast memory in general.
So in the common case that $n$ is at least a few times smaller than
the fast memory size $M$, this can be done with as many
slow memory references as there are inputs and outputs, which is
a lower bound. A similar discussion applies to the SVD of a
bidiagonal matrix $B$, although there are open numerical stability problems
regarding extending the algorithm in xSTEMR to the SVD.
Once the eigenvectors of $T$ or singular vectors of $B$ have been
computed, they must be multiplied by the orthogonal matrices used
in the reduction to get the final eigenvectors or singular vectors
of $A$. Our previous analysis of applying Householder transformations
applies here.

Now we consider the more challenging computation of the
eigenvalues and eigenvectors of a Hessenberg matrix $H$. Our
analysis applies to one pass of standard QR iteration on a dense
upper Hessenberg matrix to find its eigenvalues, but this does
$O(n^2)$ flops on $O(n^2)$ data, and so does not improve the
trivial lower bound  of the input size. We conjecture that
improvements of Braman, Byers and Mathias
\cite{BramanByersMathias02a,BramanByersMathias02b} to combine $m$
passes into one increase the flop count to $O(mn^2)$, so we get a
lower bound of $\Omega (mn^2 / M^{1/2})$. This starts to get
interesting as soon as $m > M^{1/2}$. In practice, for numerical
reasons, $m$ is usually chosen to be 256 or lower {\em (need to
confirm)}, which limits the applicability of this result.

We conjecture that our analysis applies to solving the generalized
eigenvalues problem of a matrix pencil $A - \lambda B$:
The standard algorithm begins by
reducing the pair $(A,B)$ to Hessenberg/triangular form
by applying orthogonal transformations to the left and right of $A$ and $B$.
After this the QZ algorithm is used to find eigenvalues and eigenvectors.

Finally, we mention that there is a completely different style of
divide-and-conquer algorithm for both the eigenproblem and SVD
\cite{DemmelDumitriuHoltz07}, that only uses QR factorization
and matrix multiplication to do its work. We conjecture that
this algorithm attains the communication lower bound provided by its
components. Though this algorithm also costs $O(n^3)$ flops on a general
dense matrix, the hidden constant is rather larger than for the
algorithms discussed above.


} 

\subsubsection{Discussion of QR Model}\label{sec:QR-discussion}

To get our lower bound, we consider just the multiplications in
all the different applications of block Householder
transformations $A = A - U \cdot Z$, where $Z=T\cdot U^T\cdot A$.
We argue that under a natural ``genericity assumption" this
constitutes a large fraction of all the multiplications in the
algorithm. (although this is not necessary to get a valid lower
bound). Suppose $(U^T \cdot A)(k,j)$ is nonzero; the amount of
work to compute this is at most proportional to the total number
of entries stored (and so treated as nonzeros) in column $k$ of
$U$. Since $T$ is triangular and nonsingular, this means $Z(k,j)$
will be generically nonzero as well, and will be multiplied by
column $k$ of $U$ and added to column $j$ of $A$, which costs at
least as much as computing $(U^T \cdot A)(k,j)$. The cost of the
rest of the computation, forming and multiplying by $T$ and
computing the actual Householder vectors, are lower order terms in
practice; the dimension of $T$ is chosen small enough by the
algorithm to try to assure this.  Thus, for a lower bound of
$\Omega (mn^2)$ total multiplies for a dense $m$-by-$n$ matrix
given by \cite{DemmelGrigoriHoemmenLangou08a}, there are
$\Omega(mn^2)$ multiplies of the form given by
(\ref{eqn:UZmodel}).

Unlike LU and matrix multiplication, the number of $Z(k,j)$
arguments is {\em not} bounded by $2M$, so we cannot hope to apply
Loomis-Whitney to bound the number of arithmetic operations. As an
example, suppose we do QR on the matrix $[A_1,A_2]$ where each
$A_i$ is $n$-by-$n$, and the matrix just fits in fast memory, so
that $2n^2 \approx M$. Suppose that we have performed QR on $A_1$
using $n(n-1)/2$ 2-by-2 Householder transformations (for example
zeroing the subdiagonal entries of $A_1$ from column 1 to column
$n-1$, and from bottom to top in each column). Now suppose we want
to bound the number of the operations (\ref{eqn:UZmodel}) gotten
from applying these Householder transformations to $A_2$. Then
there is one (generically) nonzero $Z(k,j)$ for each pair
consisting of a Householder vector $k=1,...,n(n-1)/2$ and a column
$j=1,...,n$ of $A_2$, or $n^2(n-1)/2 = \Theta (n^3) = \Theta
(M^{3/2})$ values of $Z(k,j)$ in all. This is too large to use
Loomis-Whitney to bound the number of multiplications in the
(single) segment constituting the algorithm, which is still
$\Theta (n^3) = \Theta (M^{3/2})$.

We note that we cannot use the fact that $R^T \cdot R = A^T \cdot
A$, so that $R$ is the Cholesky factor of $A^T \cdot A$, to get a
lower bound based on our lower bound for Cholesky, because QR
works very differently than Cholesky. For example, when $A$ is
$m$-by-$n$ with $m \ll n$, its $R$ factor will have the same shape
and cost $O(mn^2)$ operations to compute, but $A^T \cdot A$ will
be $n$-by-$n$, as will its Cholesky factor, which would cost
$O(n^3)$ to compute. Similarly, when $A$ is sparse, $A^T \cdot A$
may be much denser. So there is no simple relationship between the
two algorithms.

We note that a Givens rotation may be replaced by a 2-by-2
Householder reflection, and so we conjecture that algorithms using
Givens rotations are covered by this analysis.

\subsection{Eigenvalue and Singular Value Problems}

Standard algorithms for computing eigenvalues and eigenvectors,
or singular values and singular vectors (the SVD), start by applying
orthogonal transformations to both sides of $A$ to reduce
it to a ``condensed form'' (Hessenberg, tridiagonal or bidiagonal)
with the same eigenvalues or singular values, and simply related
eigenvectors or singular vectors \cite{Demmel97}.
We begin this section by getting communication lower bounds for
these reductions, and then discuss the communication complexity
of the algorithms for the condensed forms.
Finally, we briefly discuss a completely different family of
algorithms that does attain the same lower bounds for all these
eigenvalue problems and the SVD, but at the cost of doing more
arithmetic.

We extend our argument from the last section as follows.
We can have some arbitrary interleaving of (block) Householder transformations
applied on the left:
\[
A = (I - U_L \cdot T_L \cdot U_L^T) \cdot A
  = A - U_L \cdot (T_L \cdot U_L^T \cdot A)
  \equiv A - U_L \cdot Z_L
\]
and the right:
\[
A = A \cdot (I - U_R \cdot T_R \cdot U_R^T)
  = A - ( A \cdot U_R \cdot T_R) \cdot U_R^T
  \equiv A - Z_R \cdot U_R^T \; \; .
\]
Combining these, we can write
\begin{equation}
\label{eqn:TwoSidedModel}
A(i,j) = A(i,j) - \sum_{k_L} U_L(i,k_L ) \cdot Z_L(k_L ,j)
                - \sum_{k_R} Z_R(i,k_R ) \cdot U_R(j,k_R )
\end{equation}
Of course there are lots of possible dependencies ignored
here, much as we wrote down a similar formula for LU.
But we do assume that two-sided factorizations ``make progress''
in the same sense as one-sided QR: entries that are zeroed out
remain zeroed out by subsequent Householder transformations,
from the left or right.
By the same argument as the previous section, we can classify the Sources
and Destinations of the components of (\ref{eqn:TwoSidedModel})
as follows:

\begin{description}
\item[$A(i,j)$:] Every $A(i,j)$ operand is destined either to be output
(e.g., as an entry of the eventual Hessenberg, tridiagonal or bidiagonal matrix)
or converted into a left or right Householder vector.
So the only possible S2/D2 operands from $A$ are (sub)columns or (sub)rows
that become Householder vectors, and hence become S2 operands of either $U_L$ or $U_R$.
We bound the number of these as follows.
\item[$U_L(i,k_L)$:] All $U_L$ operands are eventually output, so
there are no D2 operands of $U_L$ (recall that we may only compute
each result $U_L(i,k_L)$ once, so it cannot be discarded). So all
S2 operands $U_L(i,k_L)$ are also D1, and so there are at most $2M$ of
them. This also bounds the number of S2/D2 operands $A(i,j)$ that become
S2 operands of $U_L$.
\item[$U_R(j,k_R)$:] This is analogous to $U_L(i,k_L)$. Again,
$2M$ bounds the number of S2/D2 operands $A(i,j)$ that become S2
operands of $U_R$.
\end{description}
Finally, since S2/D2 operands of $A(i,j)$ must either be S2 operands of
$U_L$ or $U_R$, there are at most $4M$ of these.

\ignore{
Thus, in one segment, there can
be at most $6M$ entries $A(i,j)$, $2M$ entries
$U_L(i,k_L )$ and $2M$ entries $U_R(j,k_R)$. The same
argument as above (using Loomis-Whitney for non-S2/D2 operands)
says that we can do at most
$(288M^3)^{1/2}$ multiplications
$U_L(i,k_L ) \cdot Z_L(k_L ,j)$ and also
$(288M^3)^{1/2}$ multiplications
$Z_R(i,k_R ) \cdot U_R(j,k_R )$, or $(1152M^3)^{1/2}$ in all.
The final lower bound on the number of memory operations is then
${\rm \#multiplies}/(1152M)^{1/2} - M$.
} 

Thus, in one segment, there can
be at most $6M$ entries $A(i,j)$, $2M$ entries
$U_L(i,k_L )$ and $2M$ entries $U_R(j,k_R)$.  Since there are no more than $4M$ non-S2/D2 $Z_L(k_L,j)$ operands and $4M$ non-S2/D2 $Z_R(k_R,j)$ operands in a segment, the Loomis-Whitney argument bounds the number of multiplies $U_L(i,k_L)\cdot Z_L(k_L,j)$ or $U_R(i,k_R)\cdot Z_R(k_R,j)$ involving such operands by $(48M^3)^{1/2}+(48M^3)^{1/2}=(192M^3)^{1/2}$.  An argument similar to the one given in section \ref{sec:QR} bounds the number of multiplies involving S2/D2 operands by $(72M^3)^{1/2}+(72M^3)^{1/2}=(288M^3)^{1/2}$.  Thus, the upper bound on the total number of multiplies within a segment is $(192M^3)^{1/2}+(288M^3)^{1/2}<(1152M^3)^{1/2}$, so the final lower bound on the number of memory operations is \#multiplies/$(1152M)^{1/2} - M$.

This extends our lower bound to any algorithm
that applies any sequence of left/right Householder transformations,
under the restriction of ``making progress'', and so cover
reduction to tridiagonal, bidiagonal or Hessenberg forms.
In all these cases, for dense $n$-by-$n$ matrices,
\#multiplies is a multiple of $n^3$.

\begin{theorem}
\label{thm:2SidedReduction}
$G/\sqrt{1152M} - M$ is the bandwidth lower bound for reducing a
matrix to Hessenberg, tridiagonal or bidiagonal form on a sequential machine,
where the reduction is done by multiplying on the left and right by products of
(arbitrarily blocked) Householder transformations,
and $G$ is the number of multiplications
performed when
adding multiples of Householder vectors to the matrix.
In the special case of a dense $n$-by-$n$ matrix,
this lower bound is $\Omega (n^3 / \sqrt{M})$.
\end{theorem}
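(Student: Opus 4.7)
The plan is to extend the one-sided argument of Theorem~\ref{thm:QR} by treating the left and right Householder updates symmetrically within a single segment, using the additive structure of (\ref{eqn:TwoSidedModel}) to separate their contributions to the multiplication count.

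First, I would set up the segment decomposition exactly as in Section~\ref{sec:MainResult}: break the stream of instructions into segments of $M$ slow-memory operations, and show that bounding the number $F$ of multiplies per segment by $O(M^{3/2})$ yields the stated lower bound after dividing $G$ by $F$ and multiplying by $M$. The entire problem thus reduces to bounding $F$ for two-sided Householder algorithms.

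Next I would reuse the Source/Destination accounting already carried out in the bulleted list preceding the theorem. This already gives: at most $6M$ entries $A(i,j)$ live in fast memory during a segment (the usual $4M$ from S1/D1, plus $2M$ that become S2 operands of $U_L$ or $U_R$), at most $2M$ entries of $U_L$, and at most $2M$ entries of $U_R$. I would then separately bound the multiplications coming from the two sums in (\ref{eqn:TwoSidedModel}). For the left updates $U_L(i,k_L)\cdot Z_L(k_L,j)$, I split $Z_L$ operands into non-S2/D2 (at most $4M$ of them, so Loomis-Whitney with $|A|\le 6M$, $|U_L|\le 2M$, $|Z_L|\le 4M$ bounds these multiplies by $\sqrt{48M^3}$) and S2/D2 (handled by the packing argument of Section~\ref{sec:QR}: the ``making progress'' assumption forces the nonzero rows of successive Householder columns to form a strictly decreasing chain, giving the $h\le r(r+1)/2$ estimate and hence at most $\sqrt{72M^3}$ multiplies). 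The same two bounds apply verbatim to the right updates by the symmetry $U_L\leftrightarrow U_R$, $Z_L\leftrightarrow Z_R$.

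Summing the four contributions gives
\[
F \;\le\; 2\sqrt{48M^3} + 2\sqrt{72M^3} \;<\; \sqrt{1152M^3},
\]
and plugging into the segment estimate yields at least $M\lfloor G/F\rfloor \ge G/\sqrt{1152M}-M$ slow-memory references. The $\Omega(n^3/\sqrt M)$ specialization then follows from the standard $\Theta(n^3)$ multiplication count for reduction of a dense $n$-by-$n$ matrix to Hessenberg, tridiagonal, or bidiagonal form.

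The main obstacle is verifying that the ``making progress'' argument still works when Householder transformations alternate between the two sides. In the one-sided case, a column's nonzero Householder pattern is constrained by earlier columns' patterns because a right transformation could otherwise re-fill previously annihilated entries; here I would argue that a left Householder vector $U_L(:,k_L)$ can only have nonzero support in rows not yet fully zeroed in its target column by prior left transformations, and likewise for $U_R$, so that within each of the two families the row-packing (respectively column-packing) chain inequality $r(r+1)/2\ge h$ still holds. Once this is established, the rest is a bookkeeping exercise combining the constants from Loomis-Whitney and the Householder packing bound.
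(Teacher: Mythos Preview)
Your proposal is correct and follows essentially the same approach as the paper: the paper also splits the per-segment multiplications into left and right contributions, bounds each via Loomis--Whitney on the non-S2/D2 $Z$ operands (getting $\sqrt{48M^3}$ per side) and via the Householder packing argument on the S2/D2 $Z$ operands (getting $\sqrt{72M^3}$ per side), and sums to obtain $F<\sqrt{1152M^3}$. The only minor slip is in your justification of the $6M$ bound on $A(i,j)$ entries: the S2/D2 $A$ operands that become S2 operands of $U_L$ or $U_R$ number up to $4M$ (not $2M$), since each of $U_L$ and $U_R$ contributes up to $2M$; the paper nonetheless arrives at $6M$ total, and this bookkeeping detail does not affect the structure or constants of the argument.
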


Again, an analogous result holds for parallel reductions.

None of the reduction algorithms in LAPACK \cite{LAPACK}
attain these bounds, instead having bandwidth $O(n^3)$
(the worst possible, asymptotically, even though these
algorithm try to do as much work with matrix-matrix multiplication
as possible). ScaLAPACK's tridiagonal and bidiagonal reduction
routines minimize bandwidth but not latency
\cite[Table 5.8]{SCALAPACK}.

Now we consider the rest of the eigenvalue or singular value problem.
Once a (symmetric) matrix has been reduced to tridiagonal form $T$,
it of course requires much less memory to store, just $O(n)$.
Assuming $M$ is at least a few times larger than $n$,
there are a variety of classical algorithms to compute some or all of $T$'s
eigenvalues also using just $O(n)$ fast memory.
There is also a well-known algorithm \cite{DhillonParlettVoemel06,DhillonParlett04}
(routine xSTEMR in LAPACK \cite{LAPACK})
to compute $T$'s eigenvalues and eigenvectors one-at-a-time using
$O(n^2)$ flops, and requiring $O(n)$ fast memory in general.
So in the common case that $n$ is at least a few times smaller than
the fast memory size $M$, this can be done with as many
slow memory references as there are inputs and outputs, which is
a lower bound. A similar discussion applies to the SVD of a
bidiagonal matrix $B$, although there are open numerical stability problems
regarding extending the algorithm in xSTEMR to the SVD.
Once the eigenvectors of $T$ or singular vectors of $B$ have been
computed, they must be multiplied by the orthogonal matrices used
in the reduction to get the final eigenvectors or singular vectors
of $A$. Our previous analysis of applying Householder transformations
applies here.

Now we consider the more challenging computation of the eigenvalues
and eigenvectors of a Hessenberg matrix $H$.
Our analysis applies to one pass of standard QR iteration on a
dense upper Hessenberg matrix to find its eigenvalues,
but this does $O(n^2)$ flops on $O(n^2)$ data,
and so does not improve the trivial lower bound  of the input size.
We conjecture that
improvements of Braman, Byers and Mathias \cite{BramanByersMathias02a,BramanByersMathias02b}
to combine $m$ passes into one
increase the flop count to $O(mn^2)$, so we get a lower
bound of $\Omega (mn^2 / M^{1/2})$. This starts to get interesting
as soon as $m > M^{1/2}$. In practice, for numerical reasons, $m$
is usually chosen to be 256 or lower,
which limits the applicability of this result.

We conjecture that our analysis applies to solving the generalized
eigenvalues problem of a matrix pencil $A - \lambda B$:
The standard algorithm begins by
reducing the pair $(A,B)$ to Hessenberg/triangular form
by applying orthogonal transformations to the left and right of $A$ and $B$.
After this the QZ algorithm is used to find eigenvalues and eigenvectors.

Finally, there is a completely different, divide-and-conquer
approach to solving dense eigenproblems and the SVD
\cite{DemmelDumitriuHoltz07,BallardDemmelDumitriu09}, that only
uses QR factorization and matrix multiplication to do its work,
and attains the communication lower bounds described above. We
discuss this briefly in Section \ref{sec:open}.


\ignore{ Regarding related work, we note that a communication
lower bound just for the dense case appears in
\cite{DemmelGrigoriHoemmenLangou08a}, which also has an extensive
discussion of algorithms that meet (some of) these lower bounds.
(This proof (unnecessarily) assumed each $Z$ was written to memory
(which we do not here), and so could use Loomis-Whitney approach.)
For example, the sequential algorithm in
\cite{ElmrothGustavson98,Elmroth00}, as well as LAPACK's DGEQRF,
can minimize bandwidth, but not latency. The paper
\cite{DemmelGrigoriHoemmenLangou08a} also describes sequential and
parallel QR algorithms that do attain both bandwidth and latency
lower bounds; this was accomplished by applying block Householder
transformations in a tree-like pattern, each one of which zeroing
out only part of a block column of $A$. We know of no sequential
QR factorization algorithm that minimizes communication across
more than 2 levels of the memory hierarchy.
} 

\section{Lower Bounds for Compositions of Linear Algebra
Operations}\label{sec:Compositions}
We next demonstrate how our lower bounds can be applied to more general computations where
any or all of the following apply:

\begin{enumerate}
\item
We might do a sequence of basic operations (matrix multiplication, LU, etc.).
\item The outputs of one
operation are the inputs to a later one but do not necessarily
need to be saved in slow memory,
\item The inputs may be computed
by formulas (like $A(i,j) = 1/(i+j)$) requiring no memory traffic.
\item The ultimate output written to slow memory may just be a
scalar, like the norm of a matrix.
\item An algorithm might
compute but discard some results rather than save them to memory
(e.g., ILU might discard entries of L or U whose magnitudes falls
below a threshold).
\end{enumerate}

In particular we would like a lower bound  where we are allowed to
arbitrarily interleave all the instructions from all basic
operations in the computation together, and so get a lower bound
for a global optimization of the entire program. For example, if
two different matrix multiplications share a common input matrix,
is it worth trying to interleave instructions from these two
different matrix multiplications?

A natural question is whether it is good enough to just use
optimal implementations of the basic operations, like matrix
multiplication, to attain the global lower bound. This would
clearly be the simplest way to implement the program. We know from
experience that this is not always the case. For example, LU
itself can be decomposed in many ways in terms of operations like
matrix multiplication. Yet only recently have optimal LU
algorithms been constructed. Previous LU algorithms did not attain
optimal bandwidth and latency, even when each of their composing
operations had optimal bandwidth and latency.

We give some examples, such as computing matrix powers, where it
is indeed good enough to use repeated calls to an optimal matrix
multiplication, as opposed to needing a new algorithm, and another
example where the straightforward composition does not suffice,
and a more careful interleaving of the computation is needed in
order to attain the lower bound.

\subsection{The Sequential Case}\label{sec:seq}
\subsubsection{When eliminating input/output does not save much}
In this example we consider a single linear algebra
operation, where inputs are given by formulas and the output is a
scalar (e.g., norm of the product of two matrices given by
formulas, each used once; computing the determinant of a matrix
with entries given by formulas, where one does the $LU$
decomposition and takes the product of the diagonal elements of
$U$, etc.)

Even though this seems to eliminate a large number of reads and
writes, we can prove (for this and similar examples) that the
communication lower bound is still
$\Omega\lt(\frac{\#\text{flops}}{\sqrt M }\rt)$, by using a
technique of {\em imposing reads and writes}: We take an algorithm
to which Theorem~\ref{thm:Main} does {\em not} apply, because it
may potentially have S2/D2 operands, and add (impose) memory
traffic to eliminate such operands. Then we use Theorem
\ref{thm:Main} to bound below the communication of this modified
algorithm, and subtract the amount of imposed communication to get
a lower bound for the original algorithm.

Here is an example. Consider computing $r = \| A \cdot B \|_F^2 =
\sum_{ij} (A \cdot B)_{ij}^2$, where $A_{ik} = 1/(i+k)$ and
$B_{kj}= k^{1/j}$ are given by formulas. Let $C=A \cdot B$.
Whenever the final value of some $C_{ij}$ is computed, squared,
and added to $r$, we impose a write (if it is missing) so that
$C_{ij}$ is saved in slow memory, and so has destination D1
instead of possibly D2 (it may still have source S2). Thus no
entries of $C$ can be S2/D2. Whenever the value of some $A_{ik}$
or $B_{kj}$ is computed by a formula, we impose a read to get it
from a location in slow memory, so it has source S1 instead of S2
(it may still have destination D2). Now, no entries of $A$ or $B$
can be S2/D2. Thus this modified algorithm has lower bound
$n^3/(8\sqrt{M}) - M$ by Theorem~\ref{thm:Main}.

To get a lower bound for the original algorithm, we need to bound how
many reads and writes we imposed. There are clearly at most $n^2$
imposed writes. If the original algorithm only evaluates each formula for
$A_{ik}$ and $B_{kj}$ once, and keeps their computed values in memory
if necessary for later use, then the number of imposed reads is $2n^2$, and the
communication lower bound for the original algorithm is
$n^3/(8\sqrt{M}) - M - 3n^2 = \Omega(n^3/\sqrt{M})$, close to
standard dense matrix multiplication.

On the other hand, if the original algorithm evaluates the formulas for
$A_{ik}$ and $B_{kj}$ whenever it needs them, so $n^3$ times, then
the communication lower bound for the original algorithm becomes
$n^3/(8\sqrt{M}) - M - n^2 - 2n^3$, which degenerates to zero.

\subsubsection{A sequence of basic linear algebra operations}

In the following example, we compose a sequence of basic linear
algebra operations where intermediate outputs are used as inputs
later, and never written to memory (e.g., computing consecutive
powers of a matrix, or repeated squaring). Again, even though this
seems to eliminate a large number of reads and writes, we show
that in some cases the lower bound is still
$\Omega\lt(\frac{\#\text{flops}}{\sqrt M }\rt)$, by imposing reads
and writes and merging all the operations into a single set
satisfying Equation~(\ref{eqn:Model}). This means that in such
cases we can simply call a sequence of individually optimized
linear algebra routines and do asymptotically as well as we would
do with any arbitrary interleaving.

\begin{corollary}[Consecutive powers of a matrix]
\label{cor:repeated} Let $A$ be an $n$-by-$n$ matrix, and let
$Alg$ be a sequential algorithm that computes $A^2 = A \cdot A$,
$A^3 = A^2 \cdot A$, ... , $ A^t = A^{t-1} \cdot A$, but only
needs to save $A^t$ in slow memory. Let $G$ be the total number of
multiplications performed (e.g., $G=(t-1)n^3$ if $A$ is dense),
where we assume that each entry of each $A^i$ is computed at most
once. Then no matter how the operations of $Alg$ are interleaved,
its bandwidth lower bound is $\Omega(\frac{G}{\sqrt{8M}} - M -
(t-2)n^2)$ (if the $A^i$ are sparse, we can subtract less than
$(t-2)n^2$ and get a better lower bound).
\end{corollary}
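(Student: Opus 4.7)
The plan is to adapt the ``imposing reads and writes'' technique described in the preceding subsection to the merged computation of all $t-1$ multiplications $A^{i+1} = A^i \cdot A$. The intermediate matrices $A^2,\ldots,A^{t-1}$ are exactly the operands whose entries may be created inside a segment, consumed during the same or a later segment, and then discarded without ever touching slow memory; these are the potential S2/D2 operands that prevent a direct application of Theorem~\ref{thm:Main}. The aim is to convert the algorithm into one that does fit the hypotheses of that theorem, charge for the imposed traffic, and then subtract.

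First I would merge all the multiplications into a single instance of the model~(\ref{eqn:Model}). Index each scalar multiplication by a 3-tuple $(I,j,k)$ with $I=(\ell,i)$ ranging over layer $\ell\in\{1,\ldots,t-1\}$ and row $i$, letting $c(I,j)$ point to $A^{\ell+1}_{ij}$, $a(I,k)$ to $A^{\ell}_{ik}$, and $b(k,j)$ to $A_{kj}$. Each of these addressing functions is one-to-one on its domain, and the assumption that each entry of each $A^i$ is computed at most once makes $g_{Ijk}$ correspond to a distinct multiplication $A^\ell_{ik}\cdot A_{kj}$, so the merged computation is indeed an instance of~(\ref{eqn:Model}) with $G$ equal to the total number of scalar multiplications. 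Theorem~\ref{thm:Main} never requires the lattice of index triples to live in the same cube as the original matrix indices, it only requires that each coordinate axis separately indexes distinct operands, which is the case here.

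Next I would perform the imposition: for every entry of every intermediate matrix $A^i$ with $2\le i\le t-1$, impose a write to slow memory at the moment it is first computed. Because each entry is computed exactly once, this adds at most $(t-2)n^2$ writes (strictly fewer in the sparse case, yielding the improved bound mentioned parenthetically). Every period of existence of such an entry that previously ended in destination D2 now ends in D1, and any later reuse is fed by a read from slow memory and so begins with source S1. Entries of $A$ (inputs) and of $A^t$ (final output) already have S1 source or D1 destination, so after the imposition the modified algorithm has no S2/D2 operands. Applying Theorem~\ref{thm:Main} to the modified algorithm yields at least $G/(8\sqrt{M})-M$ words of slow-memory traffic; since the modified traffic exceeds the original by at most $(t-2)n^2$, the original algorithm must itself communicate at least $G/(8\sqrt M)-M-(t-2)n^2$ words, which is the claimed bound.

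The main technical care is in the second step: verifying that imposing a single write per intermediate entry really does eliminate every S2/D2 period, including later re-creations of the same value inside subsequent segments. Because the corollary assumes each entry is computed at most once, any subsequent appearance of $A^i_{pq}$ in fast memory must arise either from residence across a segment boundary (source S1) or from a read of the copy we forced to be written (also source S1); no re-computation is allowed, so no new S2 period can appear. This is the one place where the ``computed at most once'' hypothesis in the statement is essential, and it is what makes the accounting of imposed writes exactly $(t-2)n^2$ rather than something that grows with the algorithm's running time.
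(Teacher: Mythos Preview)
Your proof is correct and follows essentially the same approach as the paper's first proof: merge all $t-1$ multiplications into a single instance of~(\ref{eqn:Model}) via the block identity $\bigl(A^2;\ldots;A^t\bigr)=\bigl(A;\ldots;A^{t-1}\bigr)\cdot A$ (which is exactly your indexing $I=(\ell,i)$), impose at most $(t-2)n^2$ writes of the intermediate powers to eliminate S2/D2 operands, apply Theorem~\ref{thm:Main}, and subtract. The paper also offers a second, self-contained proof that bypasses the imposition step by bounding the work in a segment directly: apply Loomis--Whitney to each product and then Cauchy--Schwarz to $\sum_{i}\sqrt{\#A_{i+1}\cdot\#A_i\cdot\#A_1}$ under the constraint $\sum_i \#A_i\le 4M$, obtaining $F\le 8M^{3/2}$.
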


\begin{proof}
We give two proofs, each of which may be applied to other examples.
For the first proof, we show how all the operations $A^2 = A \cdot A$ , ... ,
$A^t = A^{t-1} \cdot A$, may be combined into one set to which
Equation~(\ref{eqn:Model}), and so Theorem~\ref{thm:Main}, applies.
For Equation~(\ref{eqn:Model}) to apply, we must show that all the inputs, outputs
and multiplications can be indexed by one index set $(i,j,k)$ in the one-to-one
manner described in section~\ref{sec:MainResult}; this is most easily seen by
writing all the operations as
\[
\begin{pmatrix}
  A^2 \\
  A^3 \\
  \vdots \\
  A^t
\end{pmatrix} =
\begin{pmatrix}
  A \\
  A^2 \\
  \vdots \\
  A^{t-1}
\end{pmatrix}
\cdot A
\]
Recall that Equation~(\ref{eqn:Model}) permits inputs and output to overlap,
and ``$a(i,k)$'' and ``$b(k,j)$'' inputs to overlap,
but the ``$a(i,k)$'' inputs alone must be indexed one-to-one,
and similarly the ``$b(k,j)$'' inputs alone must be indexed one-to-one;
this is the case above.

Next, we impose writes of all the intermediate results $A^2, ... , A^{t-1}$,
yielding a new algorithm $Alg'$.
This means that there are no S2/D2 arguments, so Theorem~\ref{thm:Main}
applies to $Alg'$.
Thus the bandwidth lower bound of $Alg'$ is $\frac{G}{\sqrt{8M}} - M$,
and the bandwidth lower bound of $Alg$ is lower by the number of imposed writes,
at most $(t-2)n^2$ (less if the matrices are sparse).

Now we present a second proof, which uses the Loomis-Whitney-based analysis of
a segment more directly.
We let $\#A_i$ be the number of entries of $A^i$ in fast memory during a segment
of $Alg'$.
From the definition of a segment, we can bound
$\sum_{i=1}^t \#A_i \leq 4M$.
Applying Loomis-Whitney to each multiplication $A^{i+1} = A^i \cdot A$ that
one might do (some of) during a segment,
we can bound the number of multiplications during a segment by
$F = \sum_{i=1}^{t-1} \sqrt{\#A_{i+1} \cdot \#A_{i} \cdot \#A_1}$.
We can now bound $F$ subject to the constraint
$\sum_{i=1}^t \#A_i \leq 4M$, yielding
\begin{eqnarray*}
F & = & \sum_{i=1}^{t-1} \sqrt{\#A_{i+1} \cdot \#A_{i} \cdot \#A_1} \\
  & = & \sqrt{\#A_1} \cdot \sum_{i=1}^{t-1} \sqrt{\#A_{i+1} \cdot \#A_{i}} \\
  & \leq & \sqrt{\#A_1} \cdot \sqrt{ \sum_{i=1}^{t-1} \#A_{i+1} } \cdot \sqrt{ \sum_{i=1}^{t-1} \#A_{i} }
  \; \; \; {\rm ...\ by\ the\ Cauchy-Schwarz\ inequality} \\
  & \leq & \sqrt{4M} \cdot \sqrt{4M} \cdot \sqrt{4M} = 8\sqrt{M^3}
\end{eqnarray*}
This yields the ultimate bandwidth lower bound of $G/(8\sqrt{M}) - M$.
\end{proof}

Both proof techniques also apply to repeated squaring:
$A_{i+1} = A_i^2$ for $i=1,...,t-1$, the first proof via the identity
\[
\begin{pmatrix}
  A^2 & & & \\
    & A^4 & & \\
  & & \ddots & \\
  & & & A^{2^{t}}
\end{pmatrix}
=
\begin{pmatrix}
  A & & & \\
    & A^2 & & \\
  & & \ddots & \\
  & & & A^{2^{t-1}}
\end{pmatrix}
\cdot
\begin{pmatrix}
  A & & & \\
    & A^2 & & \\
  & & \ddots & \\
  & & & A^{2^{t-1}}
\end{pmatrix}
\]
and the second proof by bounding the number of multiplications
during a segment by maximizing
$F = \sum_{i=1}^{t-1} \sqrt{\#A_i \cdot \#A_i \cdot \#A_{i+1}}$
subject to $\sum_{i=1}^t \#A_i \leq 4M$ (here $\#A_i$ denotes the number of entries of $A^{2^{i-1}}$ available during a segment).



\subsubsection{Interleaved vs. Phased Sequences of Operations}

In some cases, one can combine and interleave basic linear algebra
operations, (e.g., a sequence of matrix multiplications) so that
the resulting algorithm no longer agrees with
Equation~(\ref{eqn:Model}), although the algorithms for performing
each of the basic linear algebra operations separately do agree
with Equation~(\ref{eqn:Model}). This may lead to an algorithm
whose minimum communication is {\em not} proportional to \#flops,
but asymptotically better.

Before giving an example, we first observe that a ``phased''
algorithm, consisting of a sequence of calls to individually
optimized basic linear algebra operations (like matrix
multiplication), where each such basic linear algebra operation
(phase) must complete before the next can begin, can offer no such
asymptotic improvements. Indeed, if we perform $Alg_1$, ...
,$Alg_t$ in phases, where $Alg_i$ has bandwidth lower bound $B_i$,
then the sequence has bandwidth lower bound $B = \sum_{i=1}^t B_i
- 2(t-1)M$. If each $B_i$ is proportional to the operation count
of $Alg_i$, then $B$ is proportional to the total operation count.
(the modest improvement $2(t-1)M$ arises since we can possibly
avoid a little communication by $Alg_{i+1}$ using the results left
in fast memory by $Alg_i$).

Let us now look at an example, where the interleaved algorithm
can do asymptotically less communication than the phased algorithm:
Consider computing the
dense matrix multiplications $C^{(k)} = A \cdot B^{(k)}$ for
$k=1,2,...,t$ where $B^{(k)}_{i,j} = \sqrt[k]{B_{i,j}}$.

The idea is that having both $A_{i,k}$ and $B_{k,j}$ in fast
memory lets us do up to $t$  evaluations of $g_{ijk}$.
Moreover, the union of all these $tn^3$ operations does not match
Equation~(\ref{eqn:Model}), since the inputs $B_{k,j}$ cannot be
indexed in a one-to-one fashion.
However, we can still give a non-trivial lower bound as follows,
analyzing the algorithm segment by segment.
Let us begin with the lower bound, then show an algorithm
attaining this lower bound.

No operands in a segment are S2/D2.  By the same argument
as in Section~\ref{sec:MainResult},
a maximum of $4M$ arguments of $A$, $B$ and any $C^{(i)}$'s
are available during a segment.
We want to bound the number of $g_{ijk}$'s that we can do during such a segment.
Let $\#A, \#B$ and $\#C^{(i)}$
denote the number of each type of argument available during the
segment. Then by Loomis-Whitney (applied $t$ times) the maximum
number of $g_{ijk}$'s is bounded by
$F = \sum_{i=1}^t \sqrt{\#A \cdot \#B \cdot \#C^{(i)}}$.
We want to maximize $F$ subject to the constraint
  $\#A + \#B + \sum_{i=1}^t \#C^{(i)} \leq 4M$.
Applying Cauchy-Schwarz as before yields
\[
F = \sqrt{\#A} \cdot \sqrt{\#B} \cdot \sum_{i=1}^t \sqrt {\#C^{(i)}}
\leq \sqrt{\#A} \cdot \sqrt{\#B} \cdot \sqrt{ \sum_{i=1}^t \#C^{(i)} } \cdot \sqrt{t}
\leq \sqrt{4M} \cdot \sqrt{4M} \cdot \sqrt{4M} \cdot \sqrt{t} = 8 \sqrt{tM^3} \; \;
\]
The number of segments is thus at least
 $\lt \lfloor \frac{tn^3}{8 M^{3/2} t^{1/2}} \rt \rfloor$
and the number of memory operations at least
  $\frac{t^{1/2}n^3}{8M^{1/2} } -M$.
This is smaller than the ``phased" lower bound for $t$ matrix
multiplications in sequence, $ \frac{tn^3}{8\sqrt M} - tM$, by an
asymptotic factor of $\Theta (\sqrt t)$.

We next show that this bound is indeed attainable, using a
different blocked matrix multiplication algorithm whose block
sizes $b_1$ and $b_2$ depend on $M$ and $t$ (see Algorithm
\ref{alg:mat-mats-mul}). The bandwidth count for this algorithm is
as follows. In the innermost loop we read/write $t$ blocks of
$C^{(1)},....,C^{(t)}$, of $M/3t$ words each. So we have $2M/3$
reads/writes for the innermost loop. Before this loop we read two
blocks (of $A$ and $B$) of $M/3$ words each. This adds up to
$O(M)$ read/writes. This is performed $\frac{n^3}{b_1^2 b_2} $
times. So the total bandwidth count is $O\lt(M \cdot  \lt(
\frac{n^3}{b_1^2 b_2} \rt)\rt) = O \lt( \frac{\sqrt{t}n^3}{\sqrt
M}\rt)$.

\begin{algorithm}
\protect\caption{Matrix-Matrices
multiplication}\label{alg:mat-mats-mul}
\begin{algorithmic}[1]
\label{mat-mats-mul}
\STATE
$b_1 = \sqrt{M/3t}$, $b_2 = \sqrt{Mt/3}$, \{so $b_1 b_2 = M/3$ \}
\STATE
  Break $A$ into blocks of size $b_1 \times b_2$.
\STATE  Break $B$ into blocks of size $b_2 \times b_1$.
\STATE  Break each $C^{(i)}$ into blocks of size $b_1 \times b_1$.
\STATE  Do block matrix multiplication, where the innermost loop reads in a block of $A$,
     a block of $B$, and one block each of $C^{(1)},....,C^{(t)}$, and updates each $C^{(i)}$ :
\FOR{$i=1$ to $n/b_1$}
    \FOR{$j=1$ to $n/b_1$}
        \FOR{$k =1$ to $n/b_2$}
           \STATE Read block $A_{i,k}$ and block $B_{k,j}$
           \FOR{$m = 1$ to $t$}
               \STATE Read block $C^{(m)}_{i,j}$
               \STATE $C^{(m)}_{i,j} += A_{i,k} \cdot (B^{(m)}_{k,j})$  \hspace*{.25in} ...\{$(B^{(m)}_{k,j})$ is recomputed each time \}
               \STATE Write $C^{(m)}_{i,j} $
           \ENDFOR
        \ENDFOR
    \ENDFOR
\ENDFOR
\end{algorithmic}
\end{algorithm}

\subsection{The Parallel Case}

  The techniques in the above Section \ref{sec:seq} for composing sequential linear
  algebra operations can be extended to the parallel case in two different ways.
  When we impose reads and writes to get an algorithm to which our
  previous lower bounds apply, we need to decide which processor's memory
  will participate in those reads and writes. The first option is to create a
  ``twin processor'' for each processor, whose memory will hold this data.
  This doubles the number of processors to which the previous lower bound
  applies, and also requires us to bound the total memory per processor
  not by $NNZ/P$ (again assuming memory is balanced among processors)
  but by the maximum of $NNZ/P$ and the largest number of reads and writes
  imposed on any processor. The second option is to have all the imposed
  reads and writes be in the local processor's memory. This keeps the
  number of processors constant, but increases $NNZ/P$ by adding the
  largest number of imposed reads and writes on each processor.
  The details are algorithm-dependent. For example,
similar to the sequential case, we obtain a tight lower bound for
repeated matrix multiplication and for repeated matrix squaring.

\subsection{Applications to Graph Algorithms}\label{sec:Graphs}

Matrix multiplication algorithms are used to solve many graph
related problems. Thus our lower bounds may hold, as long as the
matrix multiplication algorithm that is used agrees with Equation
(\ref{eqn:Model}). The bounds do not apply when using
Strassen-like algorithm (e.g., \cite{YusterZwick05}).

In some cases, one can directly match the flops performed by an
algorithm to Equation (\ref{eqn:Model}), and obtain a
communication lower bound. We next consider, for example,
matrix-multiplication-like recursive algorithms for finding the
shortest path between any pair of vertices in a graph (the
All-Pairs Shortest-Path problem). For tight upper and lower bounds
for the bandwidth of Floyd-Warshall and other related algorithms,
see \cite{MichaelPennerPrasanna02}. The algorithm works as follows
\cite{CormenLeisersonRivestStein01}. Let $l_{ij}^{(m)}$ be the
minimum weight of any path from vertex $i$ to vertex $j$ that
contains at most $m$ edges, where the weight of the edge $(i,j)$
is $w_{ij}=l_{ij}^{(1)}$. Then $l_{ij}^{(m)} = \min_{1 \leq k \leq
n} \left(l_{ik}^{(m-1)} +w_{kj} \right)$, and the recursive naive
algorithm for the All-Pairs Shortest-Path problems performs
exactly these $\Theta(n^4)$ computations. If all values
$l_{ij}^{(m)}$ are written to slow memory, then, by Theorem
\ref{thm:Main}, the bandwidth lower bound is $\Omega
\left(\frac{n^4}{\sqrt{M}}\right)$. Although this may not
 be the case ---some of the intermediate values may never reach the slow memory---
   there are
fewer than $n^3$ intermediate $l_{ij}^{(m)}$ values. Thus, by
imposing reads and writes,
the bandwidth lower bound is
$\Omega \left( \frac{n^4}{\sqrt{M}}\right)$ (note that here, similar to
the repeated matrix multiplication arguments of
Corollary \ref{cor:repeated},
after imposing writes, no two $g_{ijk}$
operations use the same two inputs, so Equation \ref{eqn:Model}
applies). Similarly, the $\Theta(n^3 \log n)$ recursive algorithm
for APSP has $O(n^2 \log n)$ intermediate values, therefore, by
Theorem \ref{thm:Main} and imposing reads and writes,
the bandwidth lower bound is
 $\Omega\left(\frac{n^3 \log n}{\sqrt{M}} \right)$.

Note that these lower bounds are attainable. As noted before (see
e.g., \cite{CormenLeisersonRivestStein01}) any matrix powering
algorithm can be converted into a APSP algorithm, by using `$+$'
instead of `$*$' and `$\min$' instead of summation. Starting with
any of the communication-avoiding optimal matrix-multiplication
algorithms (e.g., \cite{FrigoLeisersonProkopRamachandran99})
guarantees a bandwidth upper bound of $O
\left(\frac{n^4}{\sqrt{M}} \right)$ and $O\left(\frac{n^3 \log
n}{\sqrt{M}} \right)$ respectively. Using recursive-block data
structure further guarantees optimal latency for both algorithms.

The above repeated-matrix-squaring-like algorithm may,
in some cases, perform better than the communication-avoiding
implementation of Floyd-Warshall algorithm
\cite{MichaelPennerPrasanna02}. Consider the problem of finding
the neighbors of distance $t$ of every vertex.

One can use the above repeated-matrix-squaring-like algorithm for
$\log t$ phases, obtaining a running time of $\Theta(n^3 \log t)$
and communication complexity $\Theta \lt( \frac{n^3 \log t
}{\sqrt{M}} \rt) $ for dense graphs. For sparse input graphs this
may further reduce. For example, when $G$ is a union of cycles and
paths, the running time and communication bandwidth are $O(n^2
 2^t)$ and $O\lt( \frac{n^2 2^t }{\sqrt{M}} \rt) $ (as the degree
 of a vertex of the $i$th phase is at most $2^{2^i}$).

If, however, we use the Floyd-Warshall algorithm for this purpose,
we have to run it all the way through, regardless of the input
graph, resulting in running time of $\Theta(n^3)$ and
communication complexity of $\Theta \lt( \frac{n^3}{\sqrt{M}} \rt)
$ (assuming the above communication-avoiding implementation).
Thus, for $t=o( \log n)$ the repeated-matrix-squaring-like
algorithm performs better for constant-degree inputs, both from
flops count and from communication bandwidth perspectives.


\section{Attaining the lower bounds, and open problems}\label{sec:open}

A major problem is to find algorithms that attain the lower
bounds described in this paper, for the various linear algebra
problems, for dense and sparse matrices, and for sequential and
parallel machines. Tables \ref{tbl:seq} and \ref{tbl:par}
summarize the current state-of-the-art (to the best of our
knowledge) for the communication complexity of dense algorithms.
Briefly, all the lower bounds are
attainable in the dense sequential case (Table~\ref{tbl:seq}),
and in the dense parallel case (Table~\ref{tbl:par},
assuming minimal memory $O(n^2/P)$ per processor, and modulo ${\rm polylog} P$ terms).
However, only a few of these algorithms appear in standard
libraries like LAPACK \cite{LAPACK} and ScaLAPACK \cite{SCALAPACK};
the complexity of ScaLAPACK implementations is taken from
\cite[Table 5.8]{SCALAPACK}. Other libraries may well attain similar bounds
\cite{GunnelsGustavsonHenryGeijn01,PLAPACK}.

\begin{table}[!h]
\centering
\begin{tabular}{|l||c|c||c c|} \hline
 & \multicolumn{2}{|c||}{Lower bound} & \multicolumn{2}{|c|}{Upper bound} \\
\hline
Algorithm& Bandwidth & Latency & Bandwidth & Latency \\
\hline \hline
Matrix-Multiplication & & & $O \lt(\frac{n^3}{\sqrt M} \rt) $ & $O \lt(\frac{n^3}{ M^{3/2}} \rt) $ \\
                      & & &\multicolumn{2}{c|}{\cite{FrigoLeisersonProkopRamachandran99}}\\
\cline{1-1} \cline{4-5}
Cholesky              & & & $O \lt(\frac{n^3}{\sqrt M} \rt) $&$O \lt(\frac{n^3}{ M^{3/2}} \rt) $ \\
                      & & &\multicolumn{2}{c|}{\cite{AhmedPingali00,BallardDemmelHoltzSchwartz09}}\\
\cline{1-1} \cline{4-5}
$LU$                  & & & $O \lt(\frac{n^3}{\sqrt M} \rt) $& $O \lt(\frac{n^3}{ M^{3/2}} \rt) $ \\
                      & & &\cite{Toledo97} & \cite{DemmelGrigoriXiang08}\\
\cline{1-1} \cline{4-5}
$QR$                  & $\Omega \lt(\frac{n^3}{\sqrt M} \rt) $&$\Omega \lt(\frac{n^3}{ M^{3/2}} \rt) $
                          & $O \lt(\frac{n^3}{\sqrt M} \rt) $& $O \lt(\frac{n^3}{ M^{3/2}} \rt) $ \\
                                                & & & \cite{ElmrothGustavson98} & \cite{DemmelGrigoriHoemmenLangou08a}\\
\cline{1-1} \cline{4-5}
Symmetric Eigenvalues           & & & $O(\frac{n^3}{\sqrt{M}})$ & $O\lt(\frac{n^3}{M^{3/2}} \rt)$ \ \\
                      & & &\multicolumn{2}{c|}{\cite{BallardDemmelDumitriu09}}\\
\cline{1-1} \cline{4-5}
SVD                   & & & $O(\frac{n^3}{\sqrt{M}})$ & $O\lt(\frac{n^3}{M^{3/2}} \rt)$ \ \\
                      & & &\multicolumn{2}{c|}{\cite{BallardDemmelDumitriu09}}\\
\cline{1-1} \cline{4-5}
(Generalized) Nonsymmetric & & & $O(\frac{n^3}{\sqrt{M}})$ & $O\lt(\frac{n^3}{M^{3/2}} \rt)$ \ \\
Eigenvalues                & & &\multicolumn{2}{c|}{\cite{BallardDemmelDumitriu09}}\\
\hline
\end{tabular}
  \protect\caption{Sequential $\Theta(n^3)$ algorithms: bandwidth and latency
lower bound vs. upper bounds. $M$ denotes the size of the fast
memory.
}\label{tbl:seq}
\end{table}

\begin{table}[!h]
  \centering
\begin{tabular}{|l||c|c||c c|} \hline
 & \multicolumn{2}{|c||}{Lower bound} & \multicolumn{2}{|c|}{Upper bound} \\
\hline
Algorithm& Bandwidth & Latency & Bandwidth & Latency \\
\hline \hline
Matrix-Multiplication & & & $O \lt(\frac{n^2 }{\sqrt P} \rt) $ & $O \lt(\sqrt P  \rt) $ \\
                      & & &\multicolumn{2}{c|}{\cite{Cannon69}}\\
\cline{1-1} \cline{4-5}
Cholesky              & & & $O \lt(\frac{n^2 \log P}{\sqrt P} \rt) $ & $O \lt(\sqrt P \log P \rt) $ \\
                      & & &\multicolumn{2}{c|}{\cite{SCALAPACK}}\\
\cline{1-1} \cline{4-5}
$LU$                  & $\Omega \lt(\frac{n^3}{P\sqrt M} \rt) $&$\Omega \lt(\frac{n^3}{ PM^{3/2}} \rt) $ & $O \lt(\frac{n^2 \log P}{\sqrt P} \rt) $ & $O \lt(\sqrt P \log P \rt) $ \\
                      & & &\cite{DemmelGrigoriXiang08} & \cite{DemmelGrigoriXiang08}\\
\cline{1-1} \cline{4-5}
$QR$                  & $= \Omega \lt(\frac{n^2}{\sqrt P} \rt) $ & $= \Omega \lt(\sqrt P \rt) $
                      & $O \lt(\frac{n^2 \log P}{\sqrt P} \rt) $ & $O \lt(\sqrt P \log^3 P \rt) $ \\
                                                & & & \cite{DemmelGrigoriHoemmenLangou08a}& \cite{DemmelGrigoriHoemmenLangou08a}\\
\cline{1-1} \cline{4-5}
Symmetric Eigenvalues & & & $O \lt(\frac{n^2 \log P}{\sqrt P} \rt) $ & $O(\sqrt P \log^3 P)$ \\
                      & & &\multicolumn{2}{c|}{\cite{BallardDemmelDumitriu09}}\\
\cline{1-1} \cline{4-5}
SVD                   & & & $O \lt(\frac{n^2 \log P}{\sqrt P} \rt) $ & $O(\sqrt P \log^3 P)$ \\
                      & & &\multicolumn{2}{c|}{\cite{BallardDemmelDumitriu09}}\\
\cline{1-1} \cline{4-5}
(Generalized) Nonsymmetric & & & $O \lt(\frac{n^2 \log P}{\sqrt P} \rt) $ & $O(\sqrt P \log^3 P)$ \\
Eigenvalues           & & &\multicolumn{2}{c|}{\cite{BallardDemmelDumitriu09}}\\
\cline{1-1} \cline{4-5}
\hline
\end{tabular}
  \protect\caption{Parallel $\Theta \lt(\frac{n^3}{P} \rt)$ flops algorithms:
bandwidth and latency lower bound vs. upper bounds.
$P$ denotes the number of processors,
and $M=\Theta\lt(\frac{n^2}{P} \rt)$ denotes the size of the memory per processor
(this is the smallest possible memory per processor).
  }\label{tbl:par}
\end{table}

Best understood are dense matrix-multiplication, other BLAS routines,
and Cholesky, which have algorithms that attain (perhaps modulo
${\rm polylog} P$ factors) both bandwidth and latency lower bounds on
parallel machines, and on sequential machines with multiple levels
of memory hierarchy.
The optimal sequential Cholesky algorithm cited in Table~\ref{tbl:seq}
was presented in \cite{AhmedPingali00}, but first analyzed later in
\cite{BallardDemmelHoltzSchwartz09}.
The complexity of ScaLAPACK's parallel Cholesky cited in Table~\ref{tbl:par}
assumes that the largest possible block size is chosen
($NB \approx N/\sqrt{P}$ in line ``PxPOSV'' in \cite[Table 5.8]{SCALAPACK}).

More recently, optimal dense LU and QR algorithms have been
proposed that attain both bandwidth and latency lower bounds in
parallel or sequentially (with just 2 levels of memory hierarchy).
Interestingly, conventional partial pivoting must apparently be
abandoned in order to minimize both latency and bandwidth in LU
\cite{DemmelGrigoriXiang08}; we can retain partial pivoting if we
only want to minimize bandwidth \cite{Toledo97}. Similarly, we
must apparently change the standard representation of the Q matrix
in QR in order to minimize both latency and bandwidth
\cite{DemmelGrigoriHoemmenLangou08a}; we can retain the usual
representation if we only want to minimize bandwidth
\cite{ElmrothGustavson98}. See the above references for large
speedups reported over algorithms that do not try to minimize
communication. The ideas behind communication-optimal dense QR
first appear in \cite{GolubPlemmonsSameh88}, and include
\cite{ButtariLangouKurzakDongarra07,GunterVanDeGeijn05,ElmrothGustavson98};
see \cite{DemmelGrigoriHoemmenLangou08a} for a more complete list
of references.

ScaLAPACK's parallel symmetric eigensolver and SVD routine also
minimize bandwidth (modulo a $\log P$ factor), but not latency,
sending $O(n)$ messages. ScaLAPACK's nonsymmetric eigensolver communicates
much more, indeed just the Hessenberg QR iteration has $n$-times higher bandwidth.
LAPACK's symmetric and nonsymmetric eigensolvers and SVD minimize neither
bandwidth nor latency, with $O(n^3)$ bandwidth.
Recently proposed algorithms in
\cite{BallardDemmelDumitriu09,DemmelDumitriuHoltz07}
for the symmetric and nonsymmetric eigenproblems,
generalized nonsymmetric eigenproblems and SVD do appear to
attain the desired communication complexity
(modulo ${\rm polylog} P$ factors) but at the cost
of doing a possibly large constant factor more arithmetic.
(This is in contrast to the new dense LU and QR algorithms,
which do at most $O(n^2)$ more arithmetic than their
conventional counterparts.)
We note that our lower bound in Section~\ref{sec:QR} does not
apply to the first phase of the conventional algorithm for
the generalized nonsymmetric eigenproblem (reducing the pair $(A,B)$
to (Hessenberg,triangular) form), but we conjecture that it can be
extended to do so. The lower bound does apply for the generalized
nonsymmetric eigenvalue algorithm in
\cite{BallardDemmelDumitriu09,DemmelDumitriuHoltz07}.

Otherwise, for $LDL^T$, for ``3D'' parallel algorithms
other than matrix-multiplication \cite{IronyToledoTiskin04}
that replicate data and so use more memory in order to reduce communication,
for Strassen-like algorithms,
and for sparse matrices in general,
the problems are open.

We note that for sufficiently rectangular dense matrices (e.g.,
matrix-vector multiplication) or for sufficiently sparse matrices,
our lower bound may be lower than the trivial lower bound
(\#inputs + \#outputs) and so not be attainable. In this case the
natural question is whether the maximum of the two lower bounds is
attainable (as it is for dense matrix multiplication).

\newpage

\bibliographystyle{alpha}
\bibliography{bib_glb_arxiv_29}

\bigskip

\noindent
{\bf Grey Ballard}\\  \noindent
ballard@eecs.berkeley.edu\\ \noindent
Computer Science Division,
University of California, Berkeley, CA 94720. Research supported
by Microsoft and Intel funding (Award $\#$20080469) and by
matching funding by U.C. Discovery (Award $\#$DIG07-10227).

\medskip

\noindent
{\bf James Demmel}\\ \noindent
demmel@cs.berkeley.edu\\ \noindent
Mathematics Department and Computer Science Division,
University of California, Berkeley, CA 94720.
\medskip

\noindent
{\bf Olga Holtz}\\ \noindent
oholtz@eecs.berkeley.edu \\ \noindent
Departments of Mathematics,
University of California, Berkeley and Technische Universit\"at
Berlin. O. Holtz acknowledges support of the Sofja Kovalevskaja
programm of Alexander von Humboldt Foundation.

\medskip

\noindent
{\bf Oded Schwartz} \\ \noindent
odedsc@math.tu-berlin.de \\ \noindent
Departments of Mathematics,
Technische Universit\"at Berlin, 10623 Berlin, Germany.

\end{document}